\begin{document}

\title{Leveraging One-hop Information in Massive MIMO Full-Duplex Wireless Systems}

\author{{Wenzhuo Ouyang, Jingwen Bai, Ashutosh Sabharwal, \emph{Fellow, IEEE}}
\vspace{-15pt}

\thanks{The authors are with the Department of ECE, Rice University (e-mails: $\{$wenzhuo.ouyang, jingwen, ashu$\}$@rice.edu).  The work of all three authors was partially supported by NSF Grants CNS-1161596 and CNS-1314822.}}

\maketitle
\begin{abstract}
We consider a single-cell massive MIMO full-duplex wireless communication system, where the base-station (BS) is equipped with a large number of antennas. We consider the setup where the single-antenna mobile users operate in half-duplex, while each antenna at the BS is capable of full-duplex transmissions, i.e., it can transmit and receive simultaneously using the same frequency spectrum. The fundamental challenge in this system is \emph{intra-cell} inter-node interference, generated by the transmissions of uplink users to the receptions at the downlink users. The key operational challenge is estimating and aggregating inter-mobile channel estimates, which can potentially overwhelm any gains from full-duplex operation. 

In this work, we propose a scalable and distributed scheme to optimally manage the inter-node interference by utilizing a \emph{``one-hop information architecture''}. In this architecture, the BS only needs to know the  signal-to-interference-plus-noise ratio (SINR) from the downlink users. Each uplink user needs its own SINR, along with a weighted signal-plus-noise metric from its one-hop neighboring downlink users, which are the downlink users that it interferes with. The proposed one-hop information architecture does not require any network devices to comprehensively gather the vast inter-node interference channel knowledge, and hence significantly reduces the overhead. Based on the one-hop information architecture, we design a distributed power control algorithm and implement such architecture using overheard feedback information. We show that, in typical asymptotic regimes with many users and antennas, the proposed distributed power control scheme improves the overall network utility and reduces the transmission power of the uplink users.
\end{abstract}





\section{Introduction}

With each generation of wireless standards, the number of antennas at the infrastructure nodes has continued to grow to meet the increasing demand of mobile data. For example, both cellular and WiFi standards now support up to 8 antennas at infrastructure nodes. A promising and now standardized approach to use multiple antennas at the infrastructure is to use Multi-User Multiple Input Multiple Output (MU-MIMO) for supporting multiple uplink or downlink data streams in same time-slot. More recently, massive MIMO regime has been explored, which allows for a large number of antennas to reside at the infrastructure~(with orders of magnitude more antennas compared to
conventional MIMO systems), has attracted large interests in both academia and industry. The theoretical~(\cite{Guthy13,Marzetta10,Ngo2013}) and experimental results~(\cite{shepard2012argos,Suzuki12}) have shown that  massive MIMO can lead to a number of desirable properties from the network design point of view, that include reduced inter-beam interference, improved energy efficiency, and reduced inter-cell interference, among others. As a result, massive MIMO is being considered for  standardization as one of the key technologies in
next-generation of wireless systems in both cellular and WiFi~(e.g.,  3GPP LTE Release 12~\cite{3gppMassiveMIMO} and 802.11-ax~\cite{wifiMassiveMIMOFD}). Channel models with 64 BS antennas have already been standardized~\cite{3gppMassiveMIMO}, with research platform supporting even larger configurations; e.g. 96 BS antennas~\cite{Shepard2013}.


In addition to MU-MIMO, another potential avenue to achieve higher spectral efficiency is to leverage full-duplex transmission, where a full-duplex-capable device can transmit and receive at the same time using the same frequency spectrum. The full-duplex mode of network operation has the potential to double the spectrum efficiency of wireless networks and can bring substantial flexibility to higher layer design~\cite{FDapplication}. In fact,  in-band full-duplex has already become part of the ongoing standard both in 3GPP~\cite{3gppFD}  and 802.11-ax~\cite{wifiMassiveMIMOFD}. While it can be hard to integrate the full-duplex capabilities to client devices due to the processing and energy constraint, it is in fact feasible to design near-perfect full-duplex base-stations thanks to available freedom (bigger size, non-battery-powered operation) in their designs~(e.g., see \cite{duarte2014design,Everett13Paper,ashu13} and the references therein).  One method to leverage full-duplex infrastructure with half-duplex mobile handsets is to have \emph{simultaneous} multi-user up- and downlink transmissions. However, the potential for simultaneous up- and downlink MU-MIMO at the BS leads to a new challenge -- the inter-node interference \emph{within} each cell, i.e intra-cell interference, as the transmissions of uplink interfere with the receptions of downlink as illustrated in Figure~\ref{fig:FD_MIMO}. The inter-node interference hence poses a fundamental challenge to enabling full-duplex at the BS and needs to be managed efficiently.

In the presence of intra-cell interference, a centralized scheme can be used where the infrastructure device aggregates comprehensive knowledge of all the uplink, downlink and inter-node channel to make jointly optimal decisions on resource allocation. The centralized scheme may be useful for sparse scenario with small number of users. It will, however, incur a significant amount of overhead as the number of users grows, with the inter-node interference being \emph{the dominant bottleneck}. For example, consider the setup with $M$ full-duplex-capable antennas\footnote{A full duplex capable antenna is capable of transmitting and receiving at the same time. It can be realized by either using one antenna (e.g., \cite{bharadia2013full}) or using a pair of antennas (e.g., \cite{duarte2014design}\cite{Everett13Paper}).} at the BS, and with $K_{ul}$ half-duplex single antenna uplink and $K_{dl}$ half-duplex single antenna downlink users, henceforth denoted as $(M,K_{ul},K_{dl})$ massive-MIMO full-duplex system, in Figure~\ref{fig:FD_MIMO}. The centralized control scheme requires the full-duplex BS to aggregate knowledge of $K_{ul} \times K_{dl}$ inter-node interference channels, in addition to $M \times K_{ul}$ uplink and $M \times K_{dl}$ downlink channels. Note that there has been recent development on uplink/downlink channel estimation in Massive MIMO with reasonable complexity \cite{Ma14,Yin13}. For example, the $M \times K_{ul}$ uplink channels can be estimated by letting $K_{ul}$ uplink users send pilots to the BS, and the downlink channels can be obtained by letting $K_{dl}$ downlink users send pilots and estimated by utilizing channel reciprocity. Hence the $M \times (K_{ul}+K_{dl})$ up- and downlink channels can be obtained with $(K_{ul}+K_{dl})$ transmissions of overhead (pilots). However, to obtain $K_{ul} \times K_{dl}$ inter-node interference channels, each of the $K_{ul}\times K_{dl}$ channel information needs to estimated at the uplink/downlink users and sent to the infrastructure; the inter-node interference channels are two-hop away from the BS. Hence acquiring all the channel knowledge of $K_{ul} \times K_{dl}$ inter-node interference channels incurs overhead at least at the order of $K_{ul} \times K_{dl}$.  As an example, under the scenario with $K_{ul}=15$ uplink and $K_{dl}=20$ downlink users, there can be as many as $K_{ul}\times K_{dl}=300$ inter-node interference channels, which is significantly more than $K_{ul}+ K_{dl}=35$. Therefore, for centralized scheme, obtaining the inter-node interference becomes the main bottleneck and will render the overall centralized architecture non-scalable.

In this paper, we thus seek distributed methods for resource allocation that significantly reduce \emph{who} needs to know \emph{what} about \emph{which} channels. We propose, to the best of our knowledge, the first scalable and distributed interference management techniques in massive MIMO full-duplex systems with half-duplex clients. Our contributions are listed as follows.

\begin{itemize}
\item We create a scalable information architecture, namely the \emph{one-hop information architecture}, that specifies the network information needed in each cellular devices to perform optimal interference management. Specifically, we consider an architecture where (i)~BS knows only SINR of downlink channels, much like current cellular systems which rely on periodically collected SINR information for power and rate control, and (ii)~uplink transmitters know only about interference channel strength, along with a weighted interference-plus-noise (IN) metric, from the nearby downlink receivers. Therefore, for the $(M,K_{ul},K_{dl})$ massive-MIMO full-duplex system, compared to a fully centralized decision, the proposed system requires only local, at most $K_{dl}$ one-hop information at each uplink user from its neighboring downlink receivers, thus avoiding the significant non-scalable overhead for aggregating the $K_{ul}\times K_{dl}$ interference channel information at the BS.

\item Based on the one-hop information architecture, we propose a distributed power allocation algorithm to achieve the optimal sum-utility in the system. By analyzing the structural properties of the $(M,K_{ul},K_{dl})$ massive-MIMO full-duplex system, we identify that the up- and downlink power control, which is coupled by inter-node interference, can in fact be carefully decoupled. Such a decoupled structure turns out to be an enabler for the one-hop information architecture and the optimal distributed algorithm.

\item We further show that, in typical asymptotic regimes where the amount of antennas scales up with the number of users, distributed power allocation can not only improve the overall system utility, it  can also asymptotically reduce the overall amount of transmission power and hence improve the energy efficiency of mobile users.
\end{itemize}

Note that, there can be scenarios with dense users and hence strong inter-node interference, where half duplex mode can be more efficient solution. The scope of this paper is not to specify when and where to use full duplex mode of operation, but rather to explore the optimal and scalable power control schemes for massive MIMO when the full duplex mode is in use.  

The rest of the paper is organized as follows. In Section~\ref{sec:System} we formulate the physical layer model of massive MIMO full-duplex system. We proceed to present the power control problem in Section~\ref{sec:pwr_obj}. In Section~\ref{sec:dist_alg}, we present the one-hop information architecture that facilitates us to propose distributed interference management algorithm. An overhearing-based scheme is presented in Section~\ref{sec:overhearing} to implement the information architecture. We present asymptotic analysis in large number of antenna regime in Section~\ref{sec:asymp}. Numerical evaluation are provided in Section~\ref{sec:numerics} followed by conclusion remarks in Section~\ref{sec:conclusion}.

\begin{figure}
\centering
   \includegraphics[width=2in,angle=90]{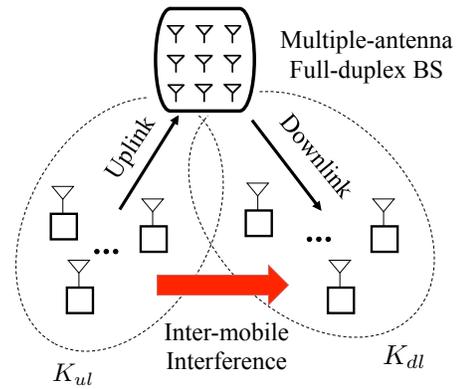}
\caption{Full-duplex multi-user network with $K_{ul}$ uplink users transmitting to the BS and $K_{dl}$ downlink users receiving from BS.}
\label{fig:FD_MIMO}
\vspace{-5pt}
\end{figure}

\section{Related Works}

There has been extensive research on resource allocation in various interference-constrained wireless networks. The interference in the wireless networks is often captured by the SINR, which in turn determines the achievable transmission rate. Power control is an important mechanism to mitigate network interference and have been well studied (see survey \cite{Mung_survey}). Such problem often requires solving nonconvex optimization problem that is proven to be NP hard to solve \cite{luo2008dynamic}. 

One line of research has focused on centralized control where a centralized controller gathers all the information across the network to perform resource allocation and interference management. For instance, the authors in \cite{julian2002qos} proposed centralized power control algorithm under high SINR regime using geographic programming. Signomial programming based approaches are proposed in \cite{chiang2007power} to achieve a local optimal solutions. An iterative optimal algorithm was proposed in \cite{MapleHuang} to achieve global optimality for a subset of non-convex objective functions. A branch and bound algorithm was proposed in \cite{weeraddana2011weighted} to solve weighted sum-rate maximization problems. Approximation algorithm was proposed in \cite{tan2013fast} to solve the NP-hard power control problem.


Another line of research has studied distributed power control algorithms for interference management. In the high SINR regime, distributed power control algorithm was converted to geometric programming \cite{chiang2005balancing}. Distributed price-based power control algorithm was proposed in \cite{huang2006distributed} and shown to be optimal in certain scenarios. Distributed Power Control and scheduling based on back-pressure algorithm is proposed in \cite{xi2010throughput} in high SINR regime. Recently, optimal distributed scheduling based on SINR model is proposed in to achieve transmission rate over a \emph{power control region} \cite{Proutiere2013}. 

Some recent work studied resource allocation for full duplex systems. The authors in \cite{di2014radio} studied optimal subcarrier allocation for Full Duplex OFDMA network. A cell partitioning method was proposed in \cite{shao2014analysis} that allocates frequency resources according to the partitioning of a single cell. 



\section{System Model}
\label{sec:System}
In this section, we describe the channel model for a single-cell massive MIMO full-duplex system where the BS is full-duplex capable while the mobile users operates in single antenna half-duplex mode.  We assume the full-duplex BS is equipped with $M$ antennas to serve \emph{both} uplink and downlink users in the same time/frequency slot. The sets of uplink and downlink users are denoted by $\mathcal{K}_{ul}$ and $\mathcal{K}_{dl}$ respectively, with $|\mathcal{K}_{ul}|={K}_{ul}$ and $|\mathcal{K}_{dl}|={K}_{dl}$, as shown in Figure~\ref{fig:FD_MIMO}.

For each uplink/downlink user $k$, an associated utility function $U_k(r)$ represents the utility it achieves under transmission rate $r$. It reflects the level of satisfaction achieved for each of the users at a given rate. Each utility function is assumed to be a non-decreasing, continuously differentiable, strictly concave function, where $U_k'(r)$ is the derivative of the utility function. The concavity property captures the diminishing return property of the utility achieved by the users in the network \cite{shakkottai2008network}. These utility functions are also used to capture different fairness requirement over the network. We are interested in controlling the transmission power, over both uplink and downlink in the massive MIMO full-duplex system, so that the achievable transmission rate provides the maximum overall sum-utility over the network. To characterize the physical-layer relationship between the transmission power and the corresponding achievable transmission rate, we start by presenting the channel model of the massive MIMO full-duplex system.

The propagation channel model in our system consists of two parts: small-scale fading due to multipath  and large-scale fading due to path loss. The channels are represented by the set $\mathcal{G}=\{\mathbf{G}^{ul},\mathbf{G}^{dl},\mathbf{G}^I\}$, where each element in $\mathcal{G}$ denotes the propagation matrix of complex-valued channel coefficients for uplink, downlink and inter-node interference channels, respectively. We can express $\mathbf{G}^k,~k\in\{ul,dl,I\}$ as
\[  \mathbf{G}^k=\mathbf{L}^{k}(\mathbf{D}^{k}_{\mathbf{g}})^{\frac{1}{2}}
\]
where $\bold{L}^{k}$ represents the small-scale fading matrix with a magnitude of one for each entry, and $(\mathbf{D}^{k}_{\mathbf{g}})^\frac{1}{2}=diag(\sqrt{\mathbf{g}^k})$ is a diagonal matrix whose diagonal entries constitutes a vector $\mathbf{g}^k$ of path loss coefficients.
  
\subsection{Uplink}

At the BS with $M$ antennas, the received uplink signal $\mathbf{y}^{ul}$ at the BS is an ${{M}\times 1}$ vector given as
\begin{gather}
\begin{aligned}
\mathbf{y}^{{ul}}&=\mathbf{G}^{ul}\mathbf{x}^{{ul}}+\mathbf{z}^{{ul}},
\end{aligned}
\end{gather}
where $\mathbf{G}^{ul}$ is an $M\times K_{ul}$ uplink channel matrix and $\mathbf{x}^{ul}$ is a $K_{ul}\times 1$ vector, with each element representing the transmit signal of the uplink users. The vector $\mathbf{z}^{{ul}}$ represents the receiver additive noise, each entry is independent and identically distributed~(i.i.d.) and drawn from a circularly symmetric complex Gaussian distribution with zero mean and variance of $N_0$ denoted as $\mathcal{C}\mathcal{N}(0,N_0)$.

In the uplink, we let $\mathbf{P}^{ul}=[P^{ul}_1,P^{ul}_2,\cdots,P^{ul}_{K_{ul}}]$ denote the vector of transmit powers, where the $P^{ul}_i$ represents the transmit power of uplink user $i$. Each user is subject to a maximum instantaneous power constraint $P_i^{max}$, i.e., $P^{ul}_i\leq P^{ul}_{max}~,\forall i\in \mathcal{K}_{ul}$. 

Following the techniques in \cite{Marzetta10,rusek2013scaling},
 the BS can simply process the received signal using maximum-ratio combining~(MRC), 
\begin{gather}
(\mathbf{G}^{ul})^H\mathbf{y}^{ul}=(\mathbf{G}^{ul})^H \mathbf{G}^{ul}\mathbf{x}^{{ul}}+(\mathbf{G}^{ul})^H\mathbf{z}^{{ul}}, \label{ul}
\end{gather}
where superscript $H$ denotes conjugate transpose. We hence have 
\begin{gather}
\frac{(\mathbf{G}^{ul})^H \mathbf{G}^{ul}}{M}=(\mathbf{D}^{ul}_{\mathbf{g}})^\frac{1}{2}\left(\frac{(\mathbf{L}^{ul})^H\mathbf{L}^{ul}}{M}\right)(\mathbf{D}^{ul}_{\mathbf{g}})^\frac{1}{2},
\end{gather}
where $(\mathbf{D}^{ul}_{\mathbf{g}})^\frac{1}{2}$ is a $K_{ul}\times K_{ul}$ matrix and $\mathbf{L}^{ul}$ is an $M\times K_{ul}$ matrix.
In massive MIMO, the number of antennas at BS is much larger than that at mobiles. Thus when $M\gg K_{ul}$, under favorable propagation conditions as manifested in~\cite{Marzetta10}, the columns of the small-scale fading matrix are asymptotically orthogonal, the following convergence is obtained~\cite{Marzetta10}, 
\begin{gather}
\left(\frac{(\mathbf{G}^{ul})^H \mathbf{G}^{ul}}{M}\right)_{M\gg K_{ul}}\rightarrow \mathbf{D}^{ul}_{\mathbf{g}} \label{mm}
\end{gather}
Now substituting~(\ref{mm}) into (\ref{ul}), we have
\begin{gather}
\left(\frac{(\mathbf{G}^{ul})^H\mathbf{y}^{ul}}{M}\right)_{M\gg K_{ul}}\rightarrow \mathbf{D}^{ul}_{\mathbf{g}}\mathbf{x}^{{ul}}+\frac{(\mathbf{G}^{ul})^H\mathbf{z}^{{ul}}}{M}. \label{eq:ortho_ul}
\end{gather}

We then compute the Signal-to-Interferene-and-Noise ratio ($\mathsf{SINR}$) for uplink user $i$ as 
\begin{gather}
\mathsf{SINR}^{ul}_i=\frac{M P^{ul}_i g^{ul}_i}{N_0},\label{eq:SINR_UL}
\end{gather}
where $\sqrt{g^{ul}_i}$ is the $i$th entry in the diagonal matrix $(\mathbf{D}^{ul}_{\mathbf{g}})^\frac{1}{2}$. Note that in the above expression, we have used the asymptotic result in~(\ref{eq:ortho_ul}). Under the domain of massive MIMO, we assume the SINR expression is valid for large value of $M$, hence ignoring the negligible error term.

Therefore, with receiver MRC, the uplink user $i$ achieves a transmission rate
\begin{align}
R_i^{ul}(P^{ul}_i)=\log\Big(1+\frac{M\cdot P^{ul}_i g_i^{ul}}{N_0}\Big), i\in\mathcal{K}_{ul},\label{eq:ul_rate}
\end{align}
where, recall that, $g_i^{ul}$ is the corresponding path loss coefficient for $i$-th user.

\subsection{Downlink}
The received signal at the $K_{dl}$ downlink users $\mathbf{y}^{{dl}}$ is a combination of the downlink signals and the interfering uplink signals, and is given by
\begin{gather}
\begin{aligned}
\mathbf{y}^{{dl}}&=\mathbf{G}^{dl}\mathbf{x}^{{dl}}+\mathbf{G}^{I}\mathbf{x}^{{ul}}+\mathbf{z}^{{dl}}, \label{dl}
\end{aligned}
\end{gather}
where $\mathbf{y}^{{dl}}$ is a $K_{dl}\times 1$ received signal vector, $\mathbf{G}^{dl}$ is a $K_{dl}\times M$ downlink channel matrix and $\mathbf{x}^{dl}$ is an $M\times 1$ downlink transmit signal vector; $\mathbf{G}^{I}$ is a $K_{dl}\times K_{ul}$ inter-node interference channel matrix; $\mathbf{z}^{{dl}}$ is the receiver additive Gaussian noise which contains i.i.d. $\mathcal{C}\mathcal{N}(0,N_0)$ entries.

In the downlink, we let $\mathbf{P}^{dl}=[P^{dl}_1,P^{dl}_2,\cdots,P^{dl}_{K_{dl}}]$ denote the vector where the element $P^{dl}_j$ represents the transmit power to downlink user $j$. The downlink transmission power allocation is subject to a constraint on the total amount of transmission powers, i.e., $\sum_{j=1}^N P^{dl}_j=P^{dl}_{tot}$. 

Note that we assume perfect serf-interference cancellation at the full-duplex-capable BS. This is inline with recent results that self-interference cancellation is capable of suppressing self-interference below the noise floor \cite{Everett13Paper}.

Similar to the uplink, in massive MIMO downlink,  under the conditions of $M\gg K_{dl}$, we have
\begin{gather}
\left(\frac{\mathbf{G}^{dl}(\mathbf{G}^{dl})^H}{M}\right)_{M\gg K_{dl}}\rightarrow \mathbf{D}^{dl}_{\mathbf{g}}. \label{mmdl}
\end{gather}

Following the techniques in \cite{rusek2013scaling}, BS can perform conjugate beamfoming precoding,
\begin{gather}
\begin{aligned}
\mathbf{x}^{{dl}}=\frac{1}{\sqrt{M}}(\mathbf{G}^{dl})^H  (\mathbf{D}^{dl}_{\mathbf{g}})^{-\frac{1}{2}} \mathbf{D}_p^\frac{1}{2} \mathbf{s}^{dl}. \label{eq:dl_symbol}
\end{aligned}
\end{gather}
where $\mathbf{D}_p^\frac{1}{2}=diag(\sqrt{\mathbf{P}^{dl}})$ is a $K_{dl}\times K_{dl}$ diagonal matrix. The $K_{dl}\times 1$ vector $\mathbf{s}^{dl}$ contains downlink messages where each message has unit power, i.e., $E(|s_j^{dl}|)=1$ for all $j\in\mathcal{K}_{dl}$. 

We substitute (\ref{mmdl}) and (\ref{eq:dl_symbol}) into (\ref{dl}) and we obtain that
\begin{gather}
\left(\frac{\mathbf{y}^{{dl}}}{\sqrt{M}}\right)_{M\gg K_{dl}}\rightarrow (\mathbf{D}_{\mathbf{g}}^{dl})^{\frac{1}{2}} \mathbf{D}_p^\frac{1}{2}\mathbf{s}^{{dl}}+\frac{\mathbf{G}^{I}\mathbf{x}^{{ul}}}{\sqrt{M}}+\frac{\mathbf{z}^{{dl}}}{\sqrt{M}}.
\end{gather}

Because the uplink and downlink transmissions are over the same frequency band, the uplink transmissions causes interference to the downlink receptions, as shown in Figure~\ref{fig:FD_MIMO}.  The interference channel strength between uplink $i$ and downlink $j$ is denoted as $g^I_{ij}$. For the uplink user $i$, we let $\mathcal{N}_i\subset \mathcal{K}^{dl}$ denote the set of downlink users that receives non-negligible interference from user $i$. For downlink user $j$, we let $\mathcal{N}_j\subset \mathcal{K}^{ul}$ denote the set of uplink users that imposes non-negligible interference to user $j$.

In the massive MIMO scenario where $M\gg K_{dl}$, the $\mathsf{SINR}$ of downlink user $j$ can then be calculated as
\begin{align}
\mathsf{SINR}^{dl}_j=\frac{M\cdot P^{dl}_j g_j^{dl}}{\sum_{i\in \mathcal{N}_j}g^I_{ij}P^{ul}_i +N_0}\label{eq:SINR_DL}
\end{align}
where $\sqrt{g^{dl}_j}$ is the $j$th entry in the diagonal matrix $(\mathbf{D}^{dl}_{\mathbf{g}})^\frac{1}{2}$, and, similar to the uplink, we have ignored the negligible error term in~(\ref{eq:SINR_DL}).

We consider the setup where the downlink receivers treat the inter-node interference as noise. Therefore, with transmit conjugate beamforming, the downlink transmission rate to user $j\in\mathcal{K}_{dl}$ is expressed as
\begin{align}
R_j^{dl}({P}_j^{dl}, \mathbf{P}^{ul})=\log\left(1+\frac{M\cdot P^{dl}_j g_j^{dl}}{\sum_{i\in \mathcal{N}_j}g^I_{ij}P^{ul}_i +N_0}\right),\label{eq:dl_rate}
\end{align}
which depends on both uplink and downlink power allocation schemes $\mathbf{P}^{ul}, \mathbf{P}^{dl}$. We henceforth denote the Interference-plus-Noise (IN) by $\mathsf{IN}_j=\sum_{i\in \mathcal{N}_j}g^I_{ij}P^{ul}_i +N_0$.

\section{Power Control Problem}
\label{sec:pwr_obj}

In this section, we formally define the power control problem. The objective of power allocation is to control the transmission powers $\mathbf{P}^{ul}, \mathbf{P}^{dl}$ to jointly manage the inter-node interference and intelligently allocate the downlink transmission powers under the interference, so that the overall sum-utility in the system is maximized. Recall that $R_i^{ul}(P^{ul}_i)$ and $R_j^{dl}(P_j^{dl},\mathbf{P}^{ul})$ are defined in~(\ref{eq:ul_rate}) and~(\ref{eq:dl_rate}) respectively. The power allocation problem is defined as the following optimization problem.
\begin{align}
&\max_{\mathbf{P}^{ul}, \mathbf{P}^{dl}} \ \sum_{i\in \mathcal{K}_{ul}}U_i\left(R_i^{ul}(P^{ul}_i)\right)+\sum_{j\in \mathcal{K}_{dl}} U_j\left(R_j^{dl}({P}_j^{dl},\mathbf{P}^{ul})\right)\label{eq:obj_accurate}\\
& \ s.t., \hspace{0.2in} \sum_{j\in\mathcal{K}_{dl}} P^{dl}_j\leq P^{dl}_{tot}, \label{eq:DL_sum_accurate}\\
&\hspace{0.36in} 0 \leq P^{ul}_i\leq P^{ul}_{max}, i\in\mathcal{K}_{ul}, P^{dl}_j\geq 0, j\in\mathcal{K}_{dl}.\label{eq:range_accurate}
\end{align}
As previously discussed, the approach of centralized control can be used to solve the problem at the infrastructure, where the full-duplex BS aggregates comprehensive knowledge across the cellular network. To optimally manage the inter-node interference imposed by the uplink users to downlink receivers, the infrastructure needs the knowledge of all the inter-node interference channel gains. From the form of optimization problem, the required amount of information at the BS is listed in Table~\ref{tab:central}. \footnote{Note that in the above table, we do not include the parameters $M$, $P^{dl}_{tot}$,$P^{ul}_{max}$, which are assumed to be predetermined information that does not need to be communicated network wide.}
\begin{table}[h]  
\centering
\begin{tabular}{|c|c|}
\hline
\textbf{Information at BS} & \textbf{Amount} \\ \hline
Uplink path loss: $g_i^{ul}$ & All $i\in\mathcal{K}_{ul}$ \\ \hline
Downlink path loss: $g_j^{dl}$ & All $j\in\mathcal{K}_{dl}$ \\ \hline
Inter-node interference:  $g_{ij}^{I}$ & All $(i,j)$ pairs, $i{\in}\mathcal{K}_{ul},j{\in}\mathcal{N}_i$ \\ \hline 
\end{tabular}
\caption{Required information for centralized algorithm at BS}\label{tab:central}
\end{table}

As discussed in the introduction, the above centralized solution is non-scalable primarily because of the significant overhead to potentially collect $ {K}_{ul}\times {K}_{dl}$ inter-node interference channels. Therefore, our goal is to design a scalable distributed structure, where the network-wide information are not aggregated at the BS. Instead, the information is distributed across the cellular network so that each device only acquires information at most one hop away. 

\section{Distributed Solution using one-hop information}
\label{sec:dist_alg}

In this section, we present a scalable network state information architecture, denoted by one-hop information. Such information architecture significantly reduces the overhead associated with centrally aggregating all the information at the infrastructure. Specifically, the one-hop information architecture is described below in Table~\ref{tab:BS_1hop}-\ref{tab:UL_1hop}, where the Interference-plus-Noise (IN) metric, i.e., $m_{i,j}[t]$ is formally defined in~(\ref{eq:metric}) in Section~\ref{sec:distributed}.
$  $
\begin{table}[h]
\centering
\begin{tabular}{|c|c|}
\hline
\textbf{Information at BS} & \textbf{Amount} \\ \hline
Downlink SINR  $\mathsf{SINR}^{dl}_j[t]$  & All $j\in\mathcal{K}_{dl}$ \\ \hline
\end{tabular}
\caption{One-hop information architecture, BS}\label{tab:BS_1hop}
\end{table}

\begin{table}[h]
\centering
\begin{tabular}{|c|c|}
\hline
\textbf{Information at uplink user $i$} & \textbf{Amount} \\ \hline
Uplink SINR:  $\mathsf{SINR}^{ul}_i[t]$  & 1 \\ \hline
IN metric: $m_{ij}[t]$  & All $j\in\mathcal{N}_{i}$ \\ \hline
\end{tabular}
\caption{{One-hop information architecture, uplink user $i$}}\label{tab:UL_1hop}
\end{table}

It can be observed from the above tables that the BS requires only SINR from the downlink users, which does not incur additional overhead to current cellular systems where BS periodically collects SINR from downlink users for power and rate control. The BS hence does not need to know information two hops away, e.g., inter-node interference. For each uplink user $i$, it only requires the one-hop information, i.e., the IN metric $m_{ij}$ for all users $j\in \mathcal{N}_i$. It does not require information more than one-hop away, e.g., $m_{i_0j}[t]$ for all users $i_0\neq i, j\in \mathcal{K}^{dl}$. Therefore, the multiplicative overhead associated centralized solutions disappears. In the the following sections, we show that under massive MIMO systems with high power gain and hence high SINR, the above one-hop information architecture enables design of optimal distributed power allocation scheme to manage inter-node interference.

\subsection{High SINR Model}

In massive MIMO full-duplex system, as illustrated in the rate expressions (\ref{eq:ul_rate}) and (\ref{eq:dl_rate}), large power gain is brought to each user owing to the large number of antennas at the BS. The high power gain in massive MIMO brings high SINR values at each receivers. Under the high SINR, the uplink transmission rates hence can be approximated by
\begin{align}
\widetilde{R}_i^{ul}(P_i^{ul}){=}\log(\mathsf{SINR}^{ul}_i){=}\log\left(\frac{M\cdot P^{ul}_i g_i^{ul}}{N_0}\right), \ i\in\mathcal{K}_{ul},\label{eq:ul_rate_approx}
\end{align}
for $P^{ul}_i\geq P_i^0$ where the parameter $P_0^i$ is the power allocation constraint such that the high SINR approximation is valid, i.e., $\mathsf{SINR}^{ul}_i>>1$. Under massive MIMO setup, $P^0_i, i\in \mathcal{K}_{ul}$ are small values. As $M$ scales further up, the value of $P^0_i$ scales down. 

In the downlink, the SINR expression given in~(\ref{eq:SINR_DL}) is high thanks to the large value of $M$ compared to the number of users in the neighborhood in the range of the cell. Correspondingly, the high SINR model gives the transmission rate expression
\begin{align}
\widetilde{R}_j^{dl}({P}_j^{dl},\mathbf{P}^{ul})=&\log\left(\frac{M\cdot P^{dl}_j g_j^{ul}}{\sum_{i\in \mathcal{N}_j}g^I_{ij}P^{ul}_i +N_0}\right), \ j\in\mathcal{K}_{dl},\label{eq:dl_rate_approx}
\end{align}
which, similar to the uplink, is valid for $P^{dl}_j\geq P_j^0$. The parameter $P_j^0$ is a small value as the number of BS antennas scales up.

Following the techniques in \cite{chiang2005balancing,xi2010throughput}, we perform a change of variables by letting $\widehat{P}^{ul}_i=\log(P^{ul}_i)$ and $\widehat{P}^{dl}_j=\log(P^{dl}_j)$ for all $i\in\mathcal{K}_{ul}$ and $j\in\mathcal{K}_{dl}$. With the new set of variables and under high SINR approximation~(\ref{eq:ul_rate_approx})(\ref{eq:dl_rate_approx}), the utility maximization problem (\ref{eq:obj_accurate})-(\ref{eq:range_accurate}) becomes
\begin{align}
&\max_{\mathbf{r}} \ \sum_{i\in \mathcal{K}_{ul}}U_i({r}_i)+\sum_{j\in \mathcal{K}_{dl}} U_j({r}_j)\label{eq:obj_log}\\
& \ s.t., \hspace{8pt} r_i-\log(M g_i^{ul}){-}\widehat{P}^{ul}_i\leq -\log N_0, \ i\in \mathcal{K}_{ul}\label{eq:ul_constraint_log}\\
&\hspace{0.11in} r_j{-}\log(M g_j^{dl}){-}\widehat{P}^{dl}_j{+}\log\Big(\hspace{-2pt}\sum_{i\in \mathcal{N}_j}g^I_{ij}\exp(\widehat{P}^{ul}_i){+}N_0\Big)\leq 0,\nonumber\\
&\hspace{2.4in}  j\in \mathcal{K}_{dl}\label{eq:dl_constraint_log}\\
&\hspace{0.37in}\sum_{j\in\mathcal{K}_{dl}} \exp(\widehat{P}^{dl}_j)\leq P^{dl}_{tot}\label{eq:dl_sun_pwr_log}\\
&\hspace{0.37in}\log P_i^0\leq \widehat{P}^{ul}_i\leq \log P^{ul}_{max}, i\in \mathcal{K}_{ul}\label{eq:ul_pwr_log}\\
&\hspace{0.37in} \widehat{P}^{dl}_j \geq \log P^0_{j}, j\in \mathcal{K}_{dl}, r_k\geq 0, k\in \mathcal{K}_{ul}\cup\mathcal{K}_{dl}\label{eq:ul_pwr_bound}
\end{align}
where $\mathbf{r}=[r_k]_{k\in \mathcal{K}_{ul}\cup\mathcal{K}_{dl}}$.

\noindent\textbf{Observatoin~1: }The high SINR provided by massive MIMO system enables us to transform the non-convex problem  (\ref{eq:obj_accurate})-(\ref{eq:range_accurate}) to the above convex optimization problem.



\subsection{Leveraging Full-Duplex Architecture to \\Reduce Overhead}

One of the main characteristics and bottleneck of the massive MIMO full-duplex system is the inter-node interference. Such interference not only can incur significant overhead in the centralized solution, but also \emph{couples} up- and downlink power control decisions. Because the uplink transmission power causes interference to downlink reception, the downlink power allocation will also be affected accordingly. In this section, we present an interesting observation that there exists a way to \emph{decouple} the interplay between up- and downlink power allocations, thanks to the uni-directional uplink to downlink interference in full-duplex system. This  structure helps reduce the overhead associated with obtaining network information to achieve the optimal utility for the aforementioned problem~(\ref{eq:obj_log})-(\ref{eq:ul_pwr_bound}).

For each uplink user $i\in \mathcal{K}_{ul}$, we associate a Lagrangian multiplier $q^{ul}_i$ that corresponds to the constraint~(\ref{eq:ul_constraint_log}). A Lagrangian multiplier $q_j^{dl}$ is associated with downlink user $j\in \mathcal{K}_{dl}$ that corresponds to constraint~(\ref{eq:dl_constraint_log}). The following dual function is then obtained
\begin{align}
D(\mathbf{q}^{ul},\mathbf{q}^{dl})=&\max_{\mathbf{r}}B(\mathbf{r},\mathbf{q}^{ul}\hspace{-2pt},\mathbf{q}^{dl}\hspace{-1pt})+\max_{\mathbf{\widehat{P}}^{ul}} V_{ul}(\mathbf{\widehat{P}}^{ul},\mathbf{q}^{ul},\mathbf{q}^{dl})\nonumber\\
&+ \max_{\mathbf{\widehat{P}}^{dl}} V_{dl}(\mathbf{\widehat{P}}^{dl},\mathbf{q}^{dl})-\sum_{i\in\mathcal{K}_{ul}}q_i^{ul}\log N_0.\label{eq:dual}
\end{align}
where
\begin{align}
&B(\mathbf{r},\mathbf{q}^{ul}\hspace{-2pt},\mathbf{q}^{dl}\hspace{-1pt}){=}\hspace{-2pt}\sum_{i\in\mathcal{K}_{ul}}\Big(U_i(r_i){-}q_i^{ul}{\cdot}r_i\Big){+}\hspace{-2pt}\sum_{j\in\mathcal{K}_{dl}}\hspace{-2pt}\Big(U_j(r_j){-}q_j^{dl}{\cdot}r_j\Big),\label{eq:rate_ctrl}\\
&V_{ul}(\mathbf{\widehat{P}}^{ul},\mathbf{q}^{ul},\mathbf{q}^{dl})= \sum_{i\in\mathcal{K}_{ul}}q_i^{ul}\cdot\log(M g_i^{ul})+ \sum_{i\in\mathcal{K}_{ul}}q_i^{ul}\widehat{P}^{ul}_i\nonumber\\
&\hspace{0.1in}-\sum_{j\in\mathcal{K}_{dl}}q_j^{dl}\cdot\log\Big(\sum_{i\in \mathcal{N}_j}g^I_{ij}\exp(\widehat{P}^{ul}_i) +N_0\Big), \nonumber\\
&V_{dl}(\mathbf{\widehat{P}}^{dl},\mathbf{q}^{dl})= \sum_{j\in\mathcal{K}_{dl}}q_j^{dl}\cdot\log(M\cdot g_i^{dl})+\sum_{j\in\mathcal{K}_{dl}}q_j^{dl}\widehat{P}^{dl}_j, \nonumber
\end{align}
with vectors $\mathbf{q}^{ul}=[q_i^{ul}]_{i\in \mathcal{K}_{ul}}$, $\mathbf{q}^{dl}=[q_j^{dl}]_{j\in \mathcal{K}_{dl}}$, $\widehat{\mathbf{P}}^{ul}=[\widehat{{P}_i}^{ul}]_{i\in \mathcal{K}_{ul}}$, $\widehat{\mathbf{P}}^{dl}=[\widehat{{P}}_j^{dl}]_{j\in \mathcal{K}_{dl}}$. 


Hence, given $\mathbf{q}^{ul},\mathbf{q}^{dl}$, the dual problem is decomposed into three sub-problems, which leads to the following immediate observation.

\textbf{Observation 2.} The uplink power control sub-problem $\max_{\mathbf{\widehat{P}}^{ul}} V_{ul}(\mathbf{\widehat{P}}^{ul},\mathbf{q}^{ul},\mathbf{q}^{dl})$, and the downlink power control sub-problem  $\max_{\mathbf{\widehat{P}}^{dl}} V_{dl}(\mathbf{\widehat{P}}^{dl},\mathbf{q}^{dl})$ are \emph{completely decoupled} as separate problems. 

\textbf{Remark:} The Lagrangian approach uses a similar technique compared with other works that considers power control problems in general networks (e.g., \cite{chiang2005balancing}). However, the full-duplex capability at the BS provides a unique decoupling structures for up- and downlink power control. In the full-duplex systems that we have considered, inter-node interference limits system performance, and up- and downlink power controls are coupled because of inter-node interference. The decoupling structure, however, interestingly reveals that these two power control schemes can actually be performed separately, given some pricing information. Such separation structure enables us to significantly reduce the information passing overhead to design the optimal power control policies.

For the uplink and downlink sub-problems, we obtain the gradient expressions as,
\begin{align}
\frac{\partial V_{ul}(\mathbf{\widehat{P}}^{ul},\mathbf{q}^{ul},\mathbf{q}^{dl})}{\partial \widehat{P}^{ul}_i}&=q^{ul}_i{-}\sum_{j: i\in \mathcal{N}_j} q_j^{dl} \frac{g^I_{ij}\exp(\widehat{P}^{ul}_i)}{\mathsf{IN}_j},\label{eq:UL_gradient}\\
\frac{\partial V_{dl}(\mathbf{\widehat{P}}^{dl},\mathbf{q}^{dl})}{\partial \widehat{P}^{dl}_j}&=q_j^{dl}.\label{eq:DL_gradient}
\end{align}

The sub-problem $\max_{\mathbf{r}}B(\mathbf{r},\mathbf{q}^{ul}\hspace{-2pt},\mathbf{q}^{dl}\hspace{-1pt})$ is the primal rate adaptation, which can be solved with ${r}^{*}_i=(U')^{-1}({q}_i^{ul})$, ${r}^{*}_j=(U')^{-1}({q}_j^{dl})$ where $i\in\mathcal{K}_{ul}, j\in\mathcal{K}_{dl}$. Hence the Lagrangian multipliers can be considered as `price' for achieving a unit of up- and downlink transmission rate.

We next design a distributed power control algorithm based on the aforementioned intuitions.

\subsection{Power Control using one-hop Information}
\label{sec:distributed}

We formally present the distributed power control and interference management algorithm next, where the notation $[x]_{\mathcal{S}}$ stands for projecting the value $x$ onto set $\mathcal{S}$, and the IN metric is 
\begin{align}
m_{ij}[t]=\frac{q^{dl}_j[t]g^I_{ij}}{\mathsf{IN}_j[t]}.\label{eq:metric}
\end{align}

\renewcommand{\thealgorithm}{}

\algnewcommand{\algorithmicgoto}{\textbf{go to}}%
\algnewcommand{\Goto}[1]{\algorithmicgoto~\ref{#1}}%

\vspace{-10pt}\begin{algorithm}[!htp]
  \caption{Distributed Algorithm for Massive MIMO Full-duplex Systems based on One-Hop Information}\label{alg:MIMO_FD}
  \begin{algorithmic}[1]
  \State \textbf{Initialization phase.} At $t=0$, set $q^{ul}_i[0]=0$, $q^{dl}_j[0]=0$, $\widehat{P}_i^{ul}[0]=\log {P}_{max}^{ul}$, $\widehat{P}_j^{dl}[0]=\log (P^{dl}_{tot}/{K}_{dl})$, $\mathsf{IN}_j[0]=N_0$ for all $i\in\mathcal{K}_{ul}$, $j\in\mathcal{K}_{dl}$. Go to step 5.
  \State \textbf{Downlink power update at the BS.} The BS updates the transmission power to the downlink users according to the gradient~(\ref{eq:DL_gradient}),
  \begin{align}
&\widehat{P}_j^{dl}[t]\nonumber\\
=&\Big[\widehat{P}_j^{dl}[t{-}1]+\gamma\cdot q_j^{dl}[t{-}1]\Big]\Big]_{\mathbf{1}\cdot {P}_j^{dl}[t-1]\leq P^{dl}_{tot},P^{dl}_j[t{-}1]\geq P_j^0},\nonumber
\end{align}
where $\gamma$ is the step size.
  \State \textbf{Uplink power update at the users.} The uplink users update the transmission power to the BS according to the gradient~(\ref{eq:UL_gradient}),
  \begin{align}
&\widehat{P}_i^{ul}[t]{=}\Big[\widehat{P}_i^{ul}[t{-}1]{+}\gamma\cdot\big[q_i^{ul}[t{-}1]\nonumber\\
&{-}\sum_{j\in\mathcal{N}_i}m_{ij}[t{-}1]\exp\left(\widehat{P}_i^{ul}[t{-}1]\right)\big]\Big]_{\log P^{i}_{0}\leq \widehat{P}_i^{ul}[t{-}1]\leq \log P_i^{max}}.\label{eq:UP_iteration}
\end{align}
  \State \textbf{Price update.} The prices $q_j^{dl}[t], j\in \mathcal{K}_{dl}$ are updated as follows,
  \begin{align}
q_j^{dl}[t]&=\Big[q_j^{dl}[t-1]+\gamma\big[r_j[t-1]-\log\big(\mathsf{SINR}_j[t-1]\big)\big]\Big]_{q_j^{dl}[t]\geq 0},\label{DL_Price_Update}
\end{align}
where $r_j[t]=(U'_j)^{-1}(q_j^{dl}[t])$. The uplink user $i\in \mathcal{K}_{ul}$ update its price by
  \begin{align}
q_i^{ul}[t]&{=}\Big[q_i^{ul}[t-1]{+}\gamma\big[r_i[t{-}1]{-}\log\big(\mathsf{SINR}_i[t{-}1]\big)\big]\Big]_{q_i^{ul}[t]\geq 0},\label{DL_Price_Update}
\end{align}
where $r_i[t]=(U'_i)^{-1}(q_i^{ul}[t])$.
  \State $t\rightarrow t+1$. Go to step 2.
\end{algorithmic}
\end{algorithm}

In the algorithm, the up- and downlink power is updated according to the gradient ~(\ref{eq:UL_gradient})(\ref{eq:DL_gradient}) in item 2 and 3. The price update in item 4 is along the direction of gradient of dual function $D(\mathbf{q}^{ul},\mathbf{q}^{dl})$. Also note that in uplink control algorithm~(\ref{eq:UP_iteration}) in item 3, the term 
\begin{align*}
m_{ij}[t-1]\exp\left(\widehat{P}_i^{ul}[t{-}1]\right)=q^{dl}_j[t-1]\frac{g^I_{ij}P^{ul}_i[t{-}1]}{\mathsf{IN}_j[t-1]}
\end{align*}
 denotes the contribution of inter-node interference from uplink user $i$ to the overall interference experienced at downlink user $j$, weighted by the price of user $j$. Therefore if user $i$ contributes significant interference to downlink user $j$ whose achievable rate is already very `pricey', user $i$'s power will tend to decrease.

Note that the above power control algorithm is implemented distributively at the uplink and downlink users. The distributed nature of the algorithm is facilitated by the decoupling property between up- and downlinks previously discussed, which is captured by the one-hop information. Specifically, because of the decoupling, given the price $\mathbf{q}^{dl}$, downlink power control is implemented independently from uplink users. Hence BS does not require uplink power information. Uplink user power control is implemented autonomously without awareness of the downlink power control policy. Note that the the one-hop information takes advantage of such structure so that all the network devices has sufficient information to perform optimal power control.

The following theorem establishes the convergence of the algorithm. The proof follows the similar lines as \cite{chiang2005balancing} and is hence neglected.
\newtheorem{theorem}{Theorem}
\begin{theorem}
Let $\phi^*$ be the set of $(\mathbf{\widehat{P}}^{ul}, \mathbf{\widehat{P}}^{dl})$ that optimally solves the problem ~(\ref{eq:obj_log})-(\ref{eq:ul_pwr_bound}). For small enough $\gamma$, we have $\lim_{t\rightarrow\infty}(\mathbf{\widehat{P}}^{ul}[t], \mathbf{\widehat{P}}^{dl}[t])\in \phi^*$.
\end{theorem}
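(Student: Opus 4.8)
The plan is to recognize the Algorithm as a primal-dual gradient method for the convex program (\ref{eq:obj_log})-(\ref{eq:ul_pwr_bound}) and to invoke the convergence theory for such methods under a sufficiently small constant step size $\gamma$. First I would record, via Observation~1, that the transformed problem is convex: the objective $\sum_{i\in\mathcal{K}_{ul}}U_i(r_i)+\sum_{j\in\mathcal{K}_{dl}}U_j(r_j)$ is strictly concave in $\mathbf{r}$, the uplink constraints (\ref{eq:ul_constraint_log}) and the box constraints (\ref{eq:ul_pwr_log})-(\ref{eq:ul_pwr_bound}) are affine, while the downlink constraints (\ref{eq:dl_constraint_log}) and the sum-power constraint (\ref{eq:dl_sun_pwr_log}) are convex because $\log\sum_i\exp(\cdot)$ and $\sum_j\exp(\cdot)$ are convex. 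A strictly feasible point is readily exhibited (set each power to the interior of its range and each $r_k$ small), so Slater's condition holds and strong duality applies; the KKT conditions then characterize the optimal set $\phi^*$, which is nonempty and, because the objective does not depend on the powers directly, may be a nontrivial set in $(\widehat{\mathbf{P}}^{ul},\widehat{\mathbf{P}}^{dl})$ — consistent with the statement $\lim_{t\to\infty}(\widehat{\mathbf{P}}^{ul}[t],\widehat{\mathbf{P}}^{dl}[t])\in\phi^*$.

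Next I would identify each line of the algorithm with a component of the saddle-point iteration for the Lagrangian whose dual is $D(\mathbf{q}^{ul},\mathbf{q}^{dl})$ in (\ref{eq:dual}). The rate step $r_k=(U'_k)^{-1}(q_k)$ is the exact maximizer of $B(\mathbf{r},\mathbf{q}^{ul},\mathbf{q}^{dl})$, well defined and smooth in $\mathbf{q}$ precisely because each $U_k$ is strictly concave. The power updates in steps~2-3 are projected gradient-ascent steps on $V_{dl}$ and $V_{ul}$ using exactly the gradients (\ref{eq:DL_gradient}) and (\ref{eq:UL_gradient}). Finally, using $\log\mathsf{SINR}^{dl}_j=\log(Mg_j^{dl})+\widehat{P}^{dl}_j-\log\mathsf{IN}_j$ and its uplink analogue, the price-update direction $r_k-\log\mathsf{SINR}_k$ in (\ref{DL_Price_Update}) is exactly the residual of constraint (\ref{eq:dl_constraint_log}) (resp. (\ref{eq:ul_constraint_log})), i.e. the (sub)gradient of the dual in $\mathbf{q}$; hence the price step is projected gradient descent on $D$ over $\mathbf{q}\geq 0$.

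With these identifications, convergence follows the same lines as \cite{chiang2005balancing}: I would show that $D$ is continuously differentiable with a Lipschitz gradient, the differentiability coming from uniqueness of the inner maximizers (strict concavity of the utilities together with the convex log-sum-exp coupling), and the Lipschitz bound from the compactness of the projection sets, which keeps the powers $\exp(\widehat{P}^{ul}_i)$ bounded, the denominators $\mathsf{IN}_j\geq N_0>0$ bounded away from zero, and the maps $(U')^{-1}$ Lipschitz. One then treats the combined update as a descent map on the primal-dual variables and applies the standard convergence result for gradient-projection algorithms on convex problems: for $\gamma<2/L$ with $L$ the Lipschitz constant of the relevant gradient map, the iterates converge, and by strong duality and the strict concavity of the primal objective the limit of $(\widehat{\mathbf{P}}^{ul}[t],\widehat{\mathbf{P}}^{dl}[t])$ lies in $\phi^*$.

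The main obstacle I anticipate is that the scheme is a genuinely joint (single-timescale) primal-dual iteration rather than a clean dual-decomposition method in which the primal is re-optimized before each price update; the power variables take only one gradient step per round. Handling the coupling between the not-yet-converged primal powers and the price updates is the delicate part, and is where the small-step-size assumption does the real work — either through a Lyapunov argument on the distance to a saddle point of the convex-concave Lagrangian, or by verifying that the joint map is a pseudo-contraction for small $\gamma$. Establishing the uniform Lipschitz and boundedness constants needed to select such a $\gamma$ (in particular keeping $\mathsf{IN}_j$ bounded below so that the gradient (\ref{eq:UL_gradient}) stays Lipschitz on the compact projection set) is the key estimate on which the whole argument rests.
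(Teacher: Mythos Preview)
Your proposal is correct and follows essentially the same approach as the paper: the paper itself does not give a detailed argument but simply states that the proof ``follows the similar lines as \cite{chiang2005balancing} and is hence neglected,'' and your plan is precisely to cast the algorithm as a primal--dual gradient scheme for the convexified problem and invoke the same convergence machinery from \cite{chiang2005balancing}. If anything, your sketch is more explicit than the paper's treatment, correctly flagging the single-timescale coupling between the power and price updates as the point where the small-$\gamma$ assumption and a Lyapunov/pseudo-contraction argument are needed.
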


It also follows from \cite{chiang2005balancing} that the algorithm converges to optimum geometrically, given in the following theorem, where $(\mathbf{P}^{ul,*},\mathbf{P}^{dl,*})\in \phi^*$ and
\begin{align*}
\epsilon[k]{=}\Big|\sum_{i\in\mathcal{K}_{ul}}U_i(\widetilde{R}_i^{ul}(P_i^{ul}[k])){+}\sum_{j\in\mathcal{K}_{dl}}U_i(\widetilde{R}_j^{dl}({P}_j^{dl}[k],\mathbf{P}^{ul}[k]))\nonumber\\
{-}\sum_{i\in\mathcal{K}_{ul}}U_i(\widetilde{R}_i^{ul}(P_i^{ul,*})){+}\sum_{j\in\mathcal{K}_{dl}}U_i(\widetilde{R}_j^{dl}({P}_j^{dl,*},\mathbf{P}^{ul,*}))\nonumber\Big|.
\end{align*}

\begin{theorem}
There exist constants $c$ and $\beta$ such that for all $k$, $\epsilon[k]\leq c \beta^k$.
\end{theorem}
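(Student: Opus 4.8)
The plan is to view the entire update in Algorithm~1 as a single fixed-point iteration $\mathbf{w}[k+1]=T_\gamma(\mathbf{w}[k])$ on the stacked vector $\mathbf{w}=(\widehat{\mathbf{P}}^{ul},\widehat{\mathbf{P}}^{dl},\mathbf{q}^{ul},\mathbf{q}^{dl})$, and to establish that for small enough $\gamma$ the map $T_\gamma$ is a contraction around the optimizer. Since the log-transformed problem~(\ref{eq:obj_log})--(\ref{eq:ul_pwr_bound}) is convex with a strictly concave objective (Observation~1), and since Theorem~1 already guarantees that the iterates converge to the optimal set $\phi^*$, the remaining task is purely to upgrade this convergence to a \emph{geometric} rate. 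First I would record that the feasible region is compact: the power variables are confined to the boxes $\log P_i^0\le\widehat{P}_i^{ul}\le\log P_{max}^{ul}$ and to the total-power constraint in~(\ref{eq:dl_sun_pwr_log}), which keeps all the exponentials $\exp(\widehat{P}_i^{ul})$ and the interference-plus-noise terms $\mathsf{IN}_j$ bounded away from $0$ and $\infty$.

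Next I would verify the two analytic ingredients that drive linear convergence of projected-gradient / gradient-ascent schemes. The first is smoothness: the gradient directions~(\ref{eq:UL_gradient}) and~(\ref{eq:DL_gradient}), together with the price-update directions $r_j-\log(\mathsf{SINR}_j)$, are all Lipschitz continuous in $\mathbf{w}$ on the compact feasible set, because their only nonlinearities are the log-sum-exp terms and the constraints, whose second derivatives are bounded once the powers are bounded. The second is curvature: the strict concavity of each $U_k$ (strengthened to a uniform bound $U_k''\le -m<0$) makes the rate-subproblem $\max_{\mathbf{r}}B$ strongly concave, so $r_k^*=(U_k')^{-1}(q_k)$ is Lipschitz in the prices, which in turn renders the dual function $D(\mathbf{q}^{ul},\mathbf{q}^{dl})$ in~(\ref{eq:dual}) smooth with a Lipschitz gradient and locally strongly concave along the directions that matter near $\phi^*$.

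With these facts in hand, I would linearize $T_\gamma$ about a point $\mathbf{w}^*\in\phi^*$ and show that the Jacobian $I-\gamma H$, where $H$ collects the Hessian and coupling blocks of the Lagrangian, has spectral radius strictly below one for all sufficiently small $\gamma$: the strong-concavity and Lipschitz bounds sandwich the relevant eigenvalues of $H$ into an interval $[m,L]$ with $0<m\le L<\infty$, so choosing $\gamma<2/L$ yields a contraction factor $\beta:=\max\{|1-\gamma m|,|1-\gamma L|\}<1$. Iterating the contraction gives $\|\mathbf{w}[k]-\mathbf{w}^*\|\le\beta^k\|\mathbf{w}[0]-\mathbf{w}^*\|$, and finally, because the sum-utility objective defining $\epsilon[k]$ is continuously differentiable and hence Lipschitz on the compact feasible region, the objective error inherits the same geometric decay, $\epsilon[k]\le c\,\beta^k$ with $c$ proportional to the utility's Lipschitz constant and the initial distance $\|\mathbf{w}[0]-\mathbf{w}^*\|$.

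The main obstacle I anticipate is the contraction step itself: the coupling between uplink powers and downlink prices through the interference terms $g^I_{ij}/\mathsf{IN}_j$ makes $H$ non-symmetric and block-coupled, so bounding its spectrum is not as immediate as in a pure strongly-convex minimization. Moreover, the downlink power variables $\widehat{\mathbf{P}}^{dl}$ enter $V_{dl}$ \emph{linearly}, so strong concavity is unavailable in those coordinates directly and must be recovered through the active total-power constraint~(\ref{eq:dl_sun_pwr_log}) together with the curvature supplied by the utilities after projection; handling the projections onto the total-power set and the power boxes, which make $T_\gamma$ only piecewise smooth, is where the estimate of \cite{chiang2005balancing} does the real work, and I would invoke its contraction bound rather than re-deriving the eigenvalue estimates from scratch.
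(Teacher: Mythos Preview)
Your proposal is essentially aligned with the paper's treatment: the paper does not give an independent proof of this theorem at all, but simply states that the geometric rate ``also follows from \cite{chiang2005balancing},'' deferring entirely to that reference. Your sketch supplies the missing outline---compactness of the feasible set, Lipschitz gradients, strong concavity from the utilities, contraction of the projected primal--dual map for small $\gamma$---and you ultimately defer to the same reference for the delicate contraction estimate, so the approaches coincide.
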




\section{Power Control Implementation \\using overhearing-based scheme}
\label{sec:overhearing}

We have previously seen how the one-hop information architecture facilitates design of distributed algorithm for managing inter-node interference. Next, we present how the architecture can be implemented using an overhearing-based scheme.


At the end of each slot $t$, a feedback message is transmitted from each downlink user $j\in\mathcal{K}_{dl}$ to the BS. The feedback message $\mathsf{\mathsf{fb}}_j[t]$ from the $j$-th downlink user contains the following value,
\begin{align}
\mathsf{\mathsf{fb}}_j[t]{=}\frac{q^{dl}_j[t]}{\mathsf{IN}_j[t]}{=}\frac{q^{dl}_j[t]\cdot \mathsf{SINR}^{dl}_j[t]}{M P_j^{dl}[t]g^{dl}_j},\label{eq:feedback}
\end{align}
where $\mathsf{IN}_j[t]$ is the interference-plus-noise measured at the $j$-th downlink user. Note also that downlink user $j$ has all the knowledge to keep a copy of price $q^{dl}_j[t]$ according to (\ref{DL_Price_Update}).

The feedback message $\mathsf{fb}_j[t]$ is received at the BS and is overheard by all the uplink users in the neighborhood $N_j$ of $j$-th downlink user. We assume these feedback packets are sent with low rate and can be received collision-free at BS and users via orthogonal frequency/time resources. The BS and the uplink users then process the packets as follows.

$\bullet$ From the embedded pilot symbols in the feedback message $\mathsf{fb}_j[t]$, the BS estimates the downlink path loss $g^{dl}_{j}$. The BS then uses $\mathsf{fb}_j[t]$ to calculate $\mathsf{SINR}^{dl}_j[t]$ at the downlink receiver $j$  at time $t$, i.e.,
\begin{align}
\mathsf{SINR}^{dl}_j[t]=\frac{\mathsf{fb}_j[t]M P_j^{dl}[t]g^{dl}_j}{q^{dl}_j[t]}. \label{eq:BS_SINR}
\end{align}

$\bullet$ The uplink users in the neighborhood $\mathcal{N}_j$ of downlink user $j$ \emph{overhear} the feedback message. From the embedded pilot symbols in the feedback message $\mathsf{fb}_j[t]$, the uplink transmitter $i \in\mathcal{N}_j$ estimates the interference channel gain $g^I_{ij}$ associated with downlink receiver $j\in\mathcal{K}_{dl}$. The uplink transmitters $i \in\mathcal{N}_j$ then, from (\ref{eq:metric}) and (\ref{eq:feedback}), use the decoded feedback $\mathsf{fb}_j[t]$ to calculate the metric $m_{ij}[t]=\mathsf{fb}_j[t]P^{ul}[t]g^I_{ij}$ and update the power allocation according to~(\ref{eq:UP_iteration}).

Therefore, with the overhearing-based feedback scheme, the BS obtains the required SINR information in Table~1, and each uplink user obtains the IN metric in Table~2. Note that in Table~2, uplink SINR can be obtained from~(\ref{eq:SINR_UL}) where the uplink power is known to the user and path loss coefficient can be estimated via BS broadcasting pilots at initialization to all users.

\section{Asymptotic Performance of \\Power Control}
\label{sec:asymp}

Massive MIMO full-duplex system brings a large amount of possible degree of freedoms, hence facilitates the capability to simultaneously serve a large number of uplink and downlink users. In this section, we study the performance gain associated with power control, as well as the asymptotic scaling of the optimal power allocation in massive MIMO full-duplex system when both the number of antennas and the number of users scale up.

We hence consider a \text{sequence of scenarios} with expanding number of users and antennas, indexed by $l$ with $l\in\mathbb{Z}^+$. For the $k$-th scenario, the set of uplink users, downlink users, and the number of BS antennas are denoted as $\mathcal{K}_{ul}^{k}$, $\mathcal{K}_{dl}^{k}$, $M_k$, respectively.

To study the system behavior when more and more users are added to the system, we study the monotonic scenario where $M_l<M_{l+1}$, and the uplink and downlink set of users monotonically expand across scenarios, i.e., $\mathcal{K}_{ul}^1\subsetneq \mathcal{K}_{ul}^2\subsetneq \mathcal{K}_{ul}^3\subsetneq\cdots$ and $\mathcal{K}_{dl}^1\subsetneq \mathcal{K}_{dl}^2 \subsetneq\mathcal{K}_{dl}^3\subsetneq\cdots$. We hence have $|\mathcal{K}_{ul}^{l}|\rightarrow\infty$, $|\mathcal{K}_{dl}^{l}|\rightarrow\infty$ and $M_l\rightarrow\infty$ as $l\rightarrow\infty$. 

We assumes the inter-node interference $g^{I}_{i.j}, i\in\mathcal{K}_{ul}, j\in\mathcal{K}_{dl}$ follow certain distribution with expectation $\mathbb{E}[g^{I}_{i,j}]=\mathcal{E}$ uniformly across $(i,j)$ pairs. This assumption is valid for the typical scenarios where users randomly appear in geographical locations of a cell, or where inter-node interference arrives via random paths.

We assume the following conditions hold true for the utility functions of the users in the network: 

(A1) For each user $k$, $\lim_{r\rightarrow \infty} U_k'(r)=0$, $k\in \mathcal{K}_{ul}\cup\mathcal{K}_{dl}$. 

(A2) The utility function $U_i$ of $i$-th user satisfies $U_i\in \mathcal{U}$, where $\mathcal{U}=\{U^1(r), U^2(r), \cdots, U^M(r)\}$ is a finite collection of utility functions.

Assumption (A1) is motivated by the diminishing return idea of the utility functions, and holds true for the well-known $\alpha$-fairness, e.g, 
\begin{align*}
U_i(r)=\omega_i \frac{r^{1-\alpha}}{1-\alpha}, \alpha>0, \alpha\neq 1, \omega_i> 0,
\end{align*}
which includes many well-known fairness criteria such as proportional fairness, minimum
potential delay fairness and max-min fairness as special cases that corresponds to different values of $\alpha$. Assumption (A2) states that there are finite choices of utility functions for each user. These choices correspond to, for instance, different types of data traffic, quality of service requirement, etc. 

For ease of exposition, we henceforth use the following notations:

\vspace{5pt}$\bullet$ Vector $\mathbf{P}^{ul,l,*}$ and $\mathbf{P}^{dl,l,*}$  respectively denote the optimal uplink and downlink power allocation that maximizes the network sum-utility (i.e.,~(\ref{eq:obj_log})-(\ref{eq:ul_pwr_bound})) in the $l$-th scenario.  

\vspace{5pt}$\bullet$ The utility $U_i^{l,*}$, $i\in \mathcal{K}^l_{ul}\cup \mathcal{K}^l_{dl}$ respectively represents the up- and downlink utility under optimal power allocation $\mathbf{P}^{ul,l,*}$ and $\mathbf{P}^{dl,l,*}$ in the $l$-th scenario.

\vspace{5pt}$\bullet$ The utility $U_i^l({P}^{ul}_{max}), i{\in}\mathcal{K}^l_{ul}$, and $U_j^l(\mathbf{P}^{ul}_{\max}), j{\in}\mathcal{K}^l_{dl}$ represent the utility when each uplink user transmits at maximum power ${P}^{ul}_{\max}$, and when the downlink users transmits at optimal power that maximizes the sum downlink utility under $\mathbf{P}^{ul}_{max}$. 

\vspace{8pt}The next theorem states that when the number $M_l$ of the BS antennas scales at the same order of the multiplication of $|\mathcal{K}^l_{ul}|$ and $|\mathcal{K}^l_{dl}|$, and when the number of downlink users scales at the same order or faster than the number of uplink users, then asymptotically the optimal uplink transmission power scales down.
\begin{theorem}\label{theorem:linear}
Suppose $\lim_{l\rightarrow\infty} \frac{M_l}{|\mathcal{K}^l_{ul}||\mathcal{K}^l_{dl}|}= C$ for some constant $C>>1$, and $\limsup_{l\rightarrow\infty}\frac{|\mathcal{K}^l_{ul}|}{|\mathcal{K}^l_{dl}|}< \infty$. For any $0<\rho<1$, we let $\Theta^l_\rho\subseteq \mathcal{K}^l_{ul}$ be the subset of uplink users with $P^{ul,l,*}_i\leq \rho P^{ul}_{max}$. Then $\lim_{l\rightarrow\infty} \frac{|\Theta_{\rho}^l|}{|\mathcal{K}^l_{ul}|}=1$.
\end{theorem}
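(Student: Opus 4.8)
The plan is to argue by contradiction, exploiting the KKT conditions of the convex program (\ref{eq:obj_log})--(\ref{eq:ul_pwr_bound}) together with the interference concentration implied by $\mathbb{E}[g^I_{ij}]=\mathcal{E}$. For each uplink user $i$ define the aggregate \emph{interference pressure} $\pi_i:=\sum_{j\in\mathcal{N}_i}m_{ij}=\sum_{j:\,i\in\mathcal{N}_j}\frac{q_j^{dl}g^I_{ij}}{\mathsf{IN}_j}$, so that the uplink gradient (\ref{eq:UL_gradient}) reads $\partial V_{ul}/\partial\widehat{P}^{ul}_i=q_i^{ul}-P_i^{ul}\pi_i$. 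Suppose the claim fails; since the ratio lies in $[0,1]$, there is $\delta>0$ and a subsequence along which the high-power set $\Phi^l:=\mathcal{K}^l_{ul}\setminus\Theta^l_\rho$ satisfies $|\Phi^l|\geq\delta|\mathcal{K}^l_{ul}|$. The contradiction comes from showing that every high-power user must have \emph{vanishing} pressure, while all but a vanishing fraction of uplink users have pressure \emph{bounded away from zero}.

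First I would pin down the pressure of a high-power user. At the optimum, maximizing $V_{ul}$ over the box $[\log P^0_i,\log P^{ul}_{max}]$ gives, for any $i$ with $P_i^{ul,l,*}>\rho P^{ul}_{max}$ (whether interior or at the upper cap), $q_i^{ul}\geq\rho P^{ul}_{max}\,\pi_i$. By complementary slackness $r_i=\widetilde{R}^{ul}_i(P_i^{ul,l,*})=\log(M_lP_i^{ul,l,*}g_i^{ul}/N_0)\geq\log(M_l\rho P^{ul}_{max}g_{\min}/N_0)$, which diverges as $M_l\to\infty$; since $q_i^{ul}=U_i'(r_i)$ and, by (A1)--(A2), $\sup_{U\in\mathcal{U}}U'(r)\to0$ as $r\to\infty$, we obtain $q_i^{ul}\leq\bar{\epsilon}_l\to0$ uniformly over $\Phi^l$. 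Hence $\pi_i\leq\bar{\epsilon}_l/(\rho P^{ul}_{max})=:\eta_l\to0$ for every high-power user. (Here I use the mild regularity that path losses lie in $[g_{\min},g_{\max}]$ with $g_{\min}>0$, consistent with users confined to a cell.)

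Next I would establish the opposing lower bound under the contradiction hypothesis. Because at least $\delta|\mathcal{K}^l_{ul}|$ users transmit above $\rho P^{ul}_{max}$, for each downlink $j$ the interference obeys $\mathsf{IN}_j\geq\rho P^{ul}_{max}\sum_{i\in\Phi^l\cap\mathcal{N}_j}g^I_{ij}$; combined with the trivial upper bound $\mathsf{IN}_j\leq P^{ul}_{max}\sum_{i\in\mathcal{N}_j}g^I_{ij}+N_0$ and the law-of-large-numbers estimate $\sum_{i\in\mathcal{N}_j}g^I_{ij}=\Theta(|\mathcal{K}^l_{ul}|)\mathcal{E}$ (using $\mathbb{E}[g^I_{ij}]=\mathcal{E}$ and, in the dense-interference regime implicit in the uniform-$\mathcal{E}$ model, $|\mathcal{N}_j|=\Theta(|\mathcal{K}^l_{ul}|)$ and $|\mathcal{N}_i|=\Theta(|\mathcal{K}^l_{dl}|)$), this yields $\mathsf{IN}_j=\Theta(|\mathcal{K}^l_{ul}|)$ uniformly in $j$. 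Plugging $M_l=\Theta(|\mathcal{K}^l_{ul}||\mathcal{K}^l_{dl}|)$ into (\ref{eq:SINR_DL}) gives $\mathsf{SINR}^{dl}_j=\Theta(|\mathcal{K}^l_{dl}|)P^{dl}_jg_j^{dl}$, and summing against the budget $\sum_jP^{dl}_j\leq P^{dl}_{tot}$ shows $\sum_j\mathsf{SINR}^{dl}_j=O(|\mathcal{K}^l_{dl}|)$. A Markov argument then forces a constant fraction of downlink users to have bounded SINR, hence bounded rate, hence $q_j^{dl}=U_j'(r_j)$ bounded below on that majority, so $\sum_{j}q_j^{dl}=\Omega(|\mathcal{K}^l_{dl}|)$.

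Finally I would combine these. Writing $\pi_i=\sum_j(q_j^{dl}/\mathsf{IN}_j)\,g^I_{ij}$ as a weighted sum of $\{g^I_{ij}\}_j$ with weights $q_j^{dl}/\mathsf{IN}_j$ that are essentially independent of $i$, a Chebyshev plus counting argument (again from $\mathbb{E}[g^I_{ij}]=\mathcal{E}$) shows that for all but $o(|\mathcal{K}^l_{ul}|)$ uplink users $\pi_i\geq\tfrac12\mathcal{E}\sum_j q_j^{dl}/\mathsf{IN}_j=\Omega\!\big(|\mathcal{K}^l_{dl}|/|\mathcal{K}^l_{ul}|\big)$; the hypothesis $\limsup_l|\mathcal{K}^l_{ul}|/|\mathcal{K}^l_{dl}|<\infty$ keeps this ratio bounded below, so $\pi_i\geq c>0$ for those users. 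This contradicts the existence of $\delta|\mathcal{K}^l_{ul}|$ high-power users with $\pi_i\leq\eta_l\to0$. Therefore $|\Phi^l|=o(|\mathcal{K}^l_{ul}|)$, i.e. $\lim_{l\to\infty}|\Theta^l_\rho|/|\mathcal{K}^l_{ul}|=1$. The main obstacle is the third step: controlling the optimal downlink prices from below, which requires showing the optimal downlink SINRs stay bounded for the overwhelming majority of users. This is exactly where the antenna scaling $M_l=\Theta(|\mathcal{K}^l_{ul}||\mathcal{K}^l_{dl}|)$ and the downlink power budget interlock, and where the concentration of the interference sums must be made rigorous rather than treated heuristically.
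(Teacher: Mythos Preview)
Your argument is correct and shares the same skeleton as the paper's proof --- a contradiction obtained by pitting the vanishing uplink marginal utility $U_i'(r_i)\to 0$ against a non-vanishing downlink marginal utility for a positive fraction of users --- but the packaging is genuinely different. The paper does \emph{not} invoke KKT conditions; instead it constructs an explicit perturbation $\mathbf{P}^{ul,l_m,new}$ (cutting every high-power user down to $\rho_0P^{ul}_{max}$ with $\rho_0<\rho$) and directly compares the sum utility before and after, showing the net change is strictly positive in the limit. Your approach instead uses first-order optimality of $V_{ul}$ to bound the pressure $\pi_i$ from above on $\Phi^l$ and from below on most of $\mathcal{K}^l_{ul}$, which is the dual formulation of the same comparison. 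The paper's Lemma~1 (a positive fraction of downlink users receive power $O(P^{dl}_{tot}/|\mathcal{K}^l_{dl}|)$, proved from the power budget) plays exactly the role of your Markov bound on $\sum_j\mathsf{SINR}^{dl}_j$: both routes arrive at the same ``bounded downlink SINR $\Rightarrow$ $q_j^{dl}$ bounded below'' conclusion. Your path is arguably tidier because it makes the role of the dual prices and the gradient~(\ref{eq:UL_gradient}) explicit; the paper's path is slightly more elementary since it needs only a primal comparison, not strong duality or stationarity. Both proofs treat the law-of-large-numbers step heuristically --- the sets $\Phi^l,\Theta^l$, the prices $q_j^{dl}$, and the interference $\mathsf{IN}_j$ all depend on the random $g^I_{ij}$, so the concentration is not automatic --- and you correctly flag this as the place that would need care in a fully rigorous treatment.
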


\begin{proof}
See Appendix A.
\end{proof}

\textbf{Remark:} 

(1) Theorem~\ref{theorem:linear} studies the asymptotic regime when $\lim_{l\rightarrow\infty} \frac{M_l}{|\mathcal{K}^l_{ul}||\mathcal{K}^l_{dl}|}= C$, where the BS antennas $M_l$ scales at the same order of the multiplication of $K^l_{ul}$ and $K^l_{dl}$. Intuitively this means the amount of antennas grows sufficiently fast to combat \emph{simultaneously} the growing inter-node interference (from the growing number of uplink users) and also to provide sufficient degrees of freedom to the downlink users to guarantee a non-trivial SINR gain at the downlink users expressed in~(\ref{eq:SINR_DL}). Theorem~\ref{theorem:linear} also assumes that $\limsup_{l\rightarrow\infty}\frac{|\mathcal{K}^l_{ul}|}{|\mathcal{K}^l_{dl}|}< \infty$, i.e., in the asymptotic regime, the number of downlink users scales at the same order or faster than the number of uplink users is valid in typical scenarios where the network is often dominated by the downlink traffic.

(2) Theorem~\ref{theorem:linear} states that in the considered asymptotic regime, the optimal transmission power asymptotically scales down. This is intuitive because the uplink SINR grows linearly with the number of antennas. However, because of the diminishing decay property of the utility function, the marginal utility gain due to the increasing number of antennas will reduce. Therefore, decreasing the uplink transmission power (e.g., decrease power by $\rho$) will not bring significant loss in the uplink utility, it does, however, bring significant downlink utility gain. This is because, as the number of uplink users increases, the aggregated interference can be large at \emph{each} of the downlink users.

\begin{figure}
\centering
\includegraphics[width=1.65in]{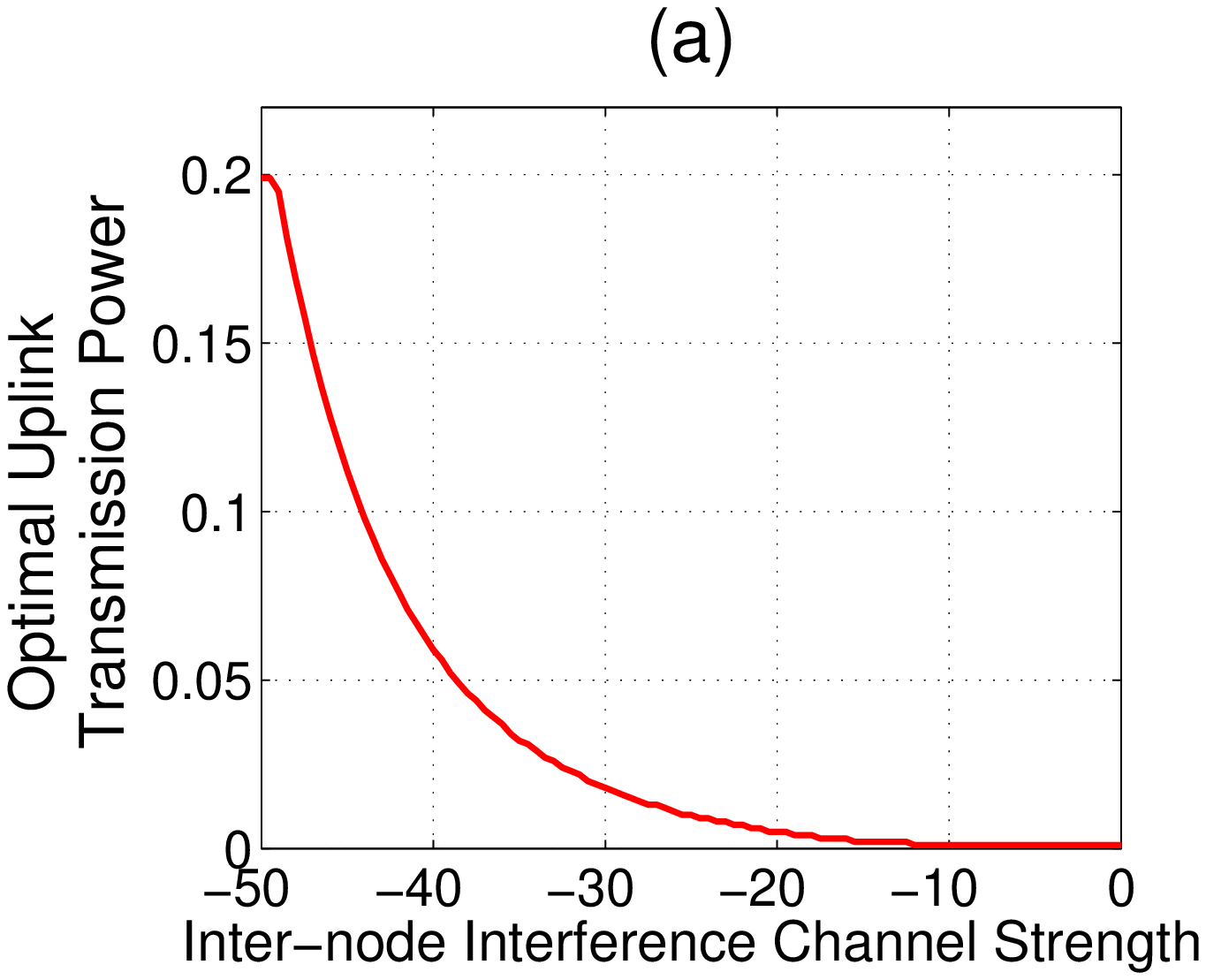}
\includegraphics[width=1.62in]{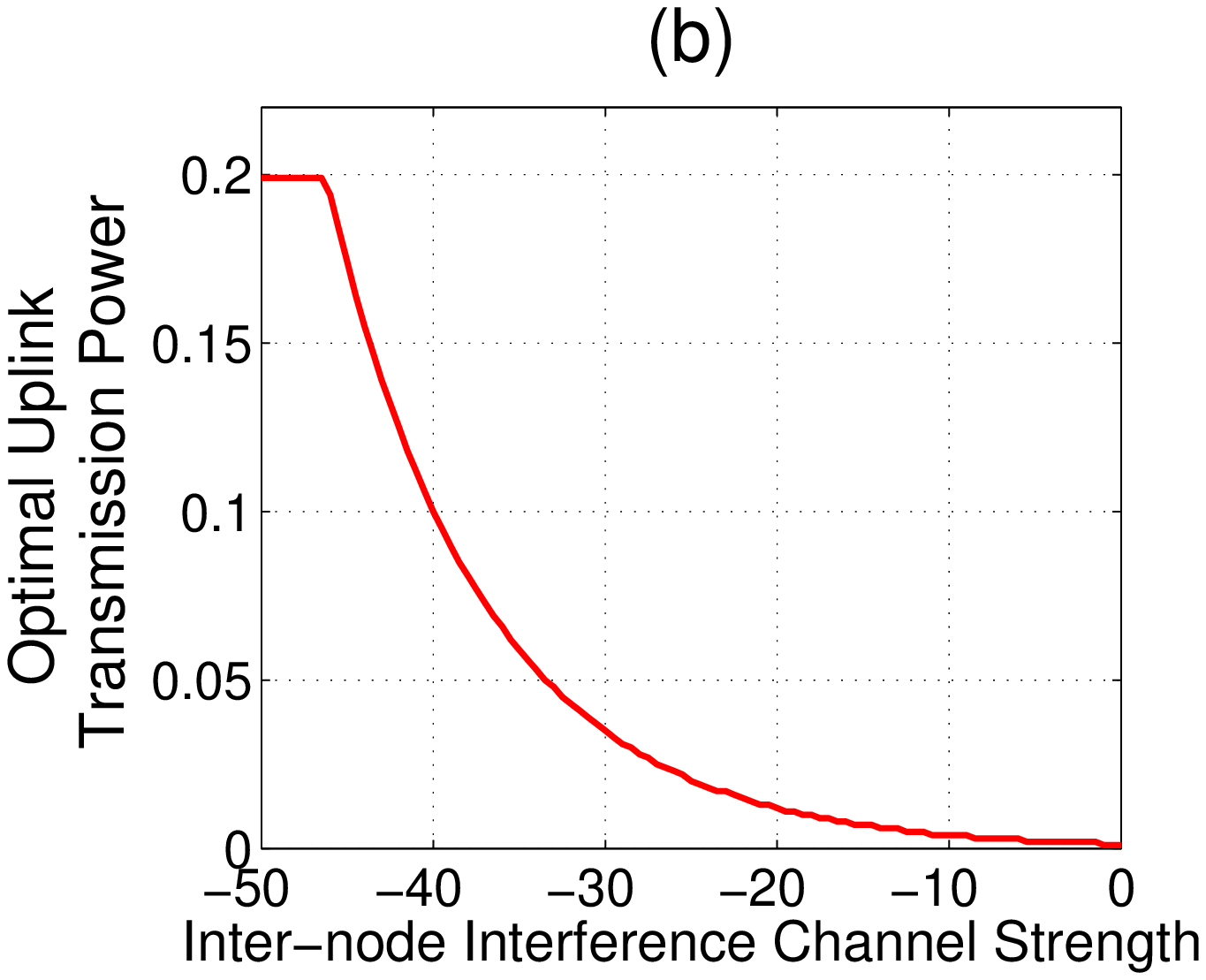}
\includegraphics[width=1.64in]{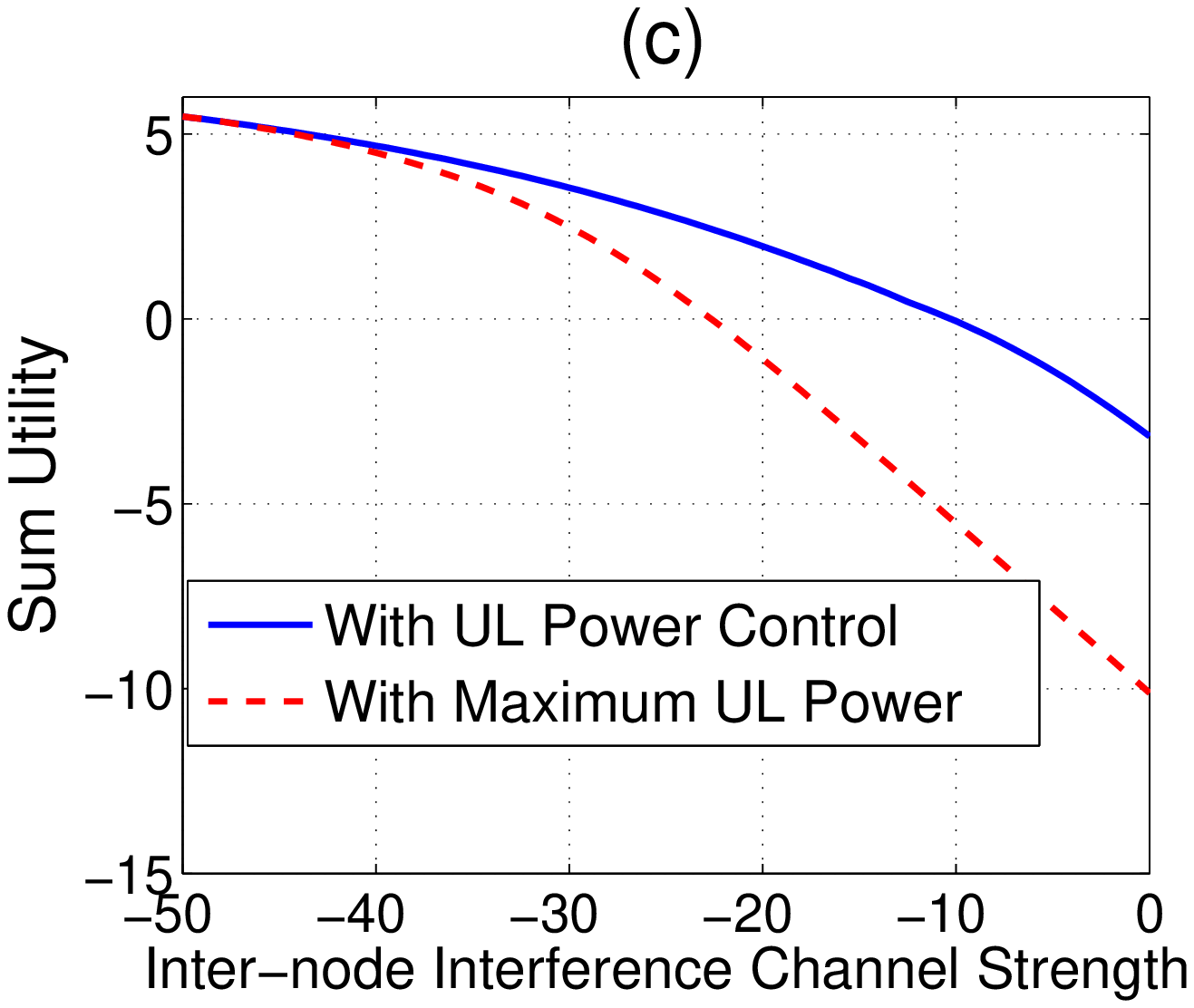}
\includegraphics[width=1.64in]{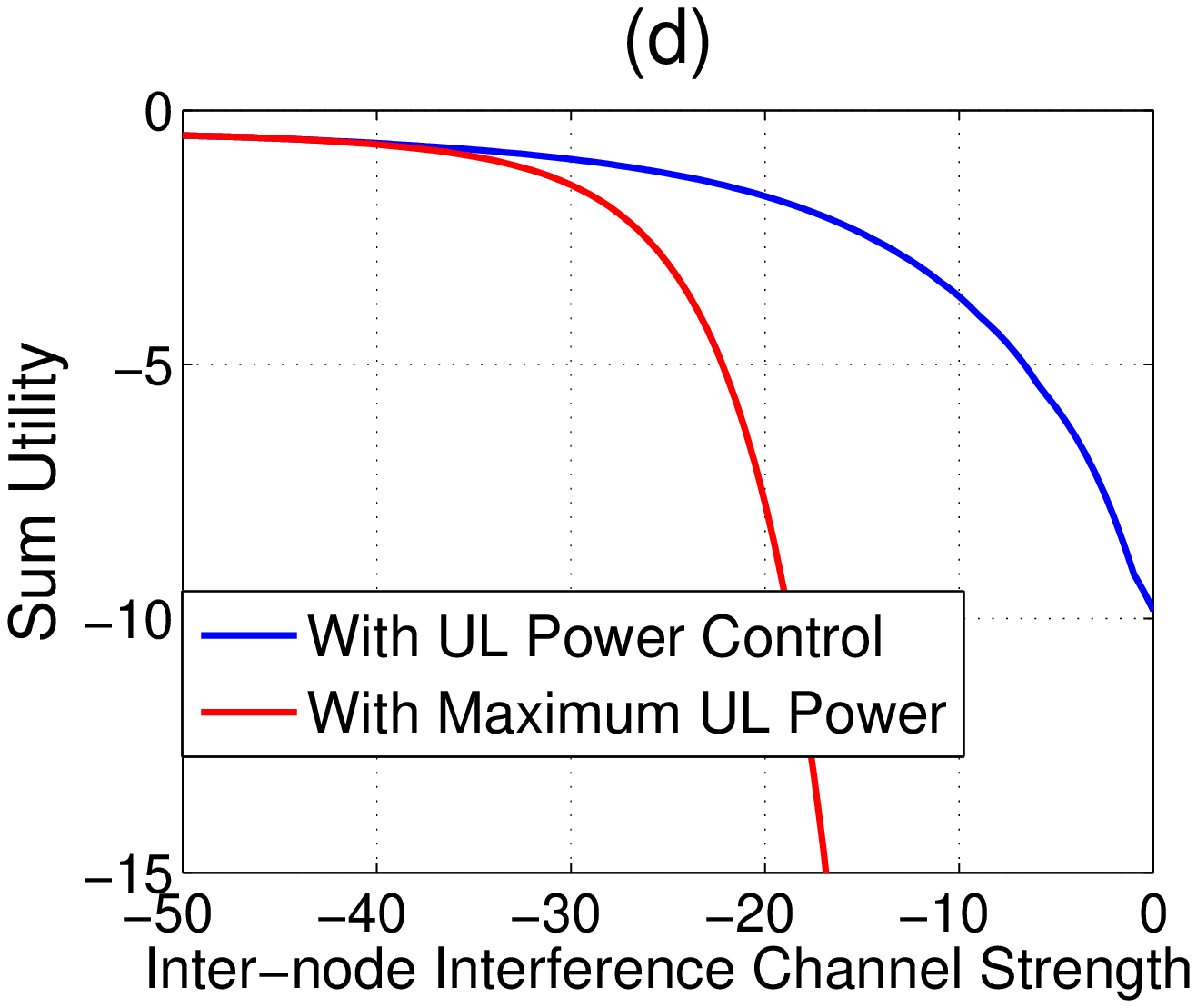}
\vspace{-7.5pt}
\caption{Performance evaluation of power control algorithm for two-user case. (a) Optimal power scaling with weighted proportional fair utility; (b) Optimal power scaling with weighted minimum potential delay utility;(c) Comparison with Naive scheme with $P^{ul}=P^{ul}_{max}$ under weighted proportional fair utility; (d) Comparison with Naive scheme with $P^{ul}=P^{ul}_{max}$ under weighted minimum potential delay utility.}
\label{fig:Two-user}
\vspace{-8pt}
\end{figure}





\section{Numerical Evaluations}
\label{sec:numerics}

We examine the power control algorithm performance for the massive MIMO full-duplex with one uplink user and one downlink user. We consider the scenario with $128$ antennas at the BS, i.e., $M=128$. The uplink and downlink path loss parameters are $g^{ul}=-60~dB$, $g^{dl}=-70~dB$ respectively. The power control is subject to the maximum power constraints $P^{ul}_{max}= 23~dBm$, $P^{dl}_{tot}= 45~dBm$. The noise power is $N_0=-30~dBW$. These parameters are set in accordance with practical values in existing LTE standards. Fig.~\ref{fig:Two-user}(a) and Fig.~\ref{fig:Two-user}(c) correspond to weighted proportional fair utility function, with $U^{ul}(r)=\log(r), U^{dl}(r)=2\log(r)$.  Figure~\ref{fig:Two-user}(b) and Figure~\ref{fig:Two-user}(d) correspond to weighted minimum potential delay utility function, i.e., $U^{ul}(r)=-1/r, U^{dl}(r)=-2/r$. For this scenario, it is optimal for the downlink user to always transmit at the maximum power since there is only one downlink user.

Figure~\ref{fig:Two-user}(a)-(b) plot the variation of the optimal uplink transmission power with the gain $g^I$ of the inter-node interference channel. Figure~\ref{fig:Two-user}(c)-(d) compare the sum-utility under optimal power control and when $P^{ul}=P^{ul}_{max}$. As observed in the figures, when the inter-node interference channel is weak, it is optimal to transmit at the maximum uplink power since the impact on downlink transmission is insignificant. Hence in this regime the uplink transmits power at the maximum power and there is a negligible loss at the downlink as evident in Figure~\ref{fig:Two-user}(c)-(d). As the inter-node interference channel gain grows stronger, for both cases of utility functions, the optimal uplink transmission power decays. This is because under strong interference channel, transmission of uplink users imposes significant interference to the downlink transmission.  Because of the diminishing return property of utility functions, reducing the uplink transmission power can alleviate downlink utility loss and bring higher overall sum utility in the network. 

Figure~\ref{fig:Two-user}(c)-(d) also highlight the importance of power control in moderate to high interference regime. As the gain in the inter-node interference channel becomes stronger, power control is increasingly more important to manage the interference created by the uplink users to the downlink receivers to maintain the overall system-level utility, as illustrated by the increasing utility gains in Figure~\ref{fig:Two-user}(c)(d).

\begin{figure}
\centering
\includegraphics[width=3.1in]{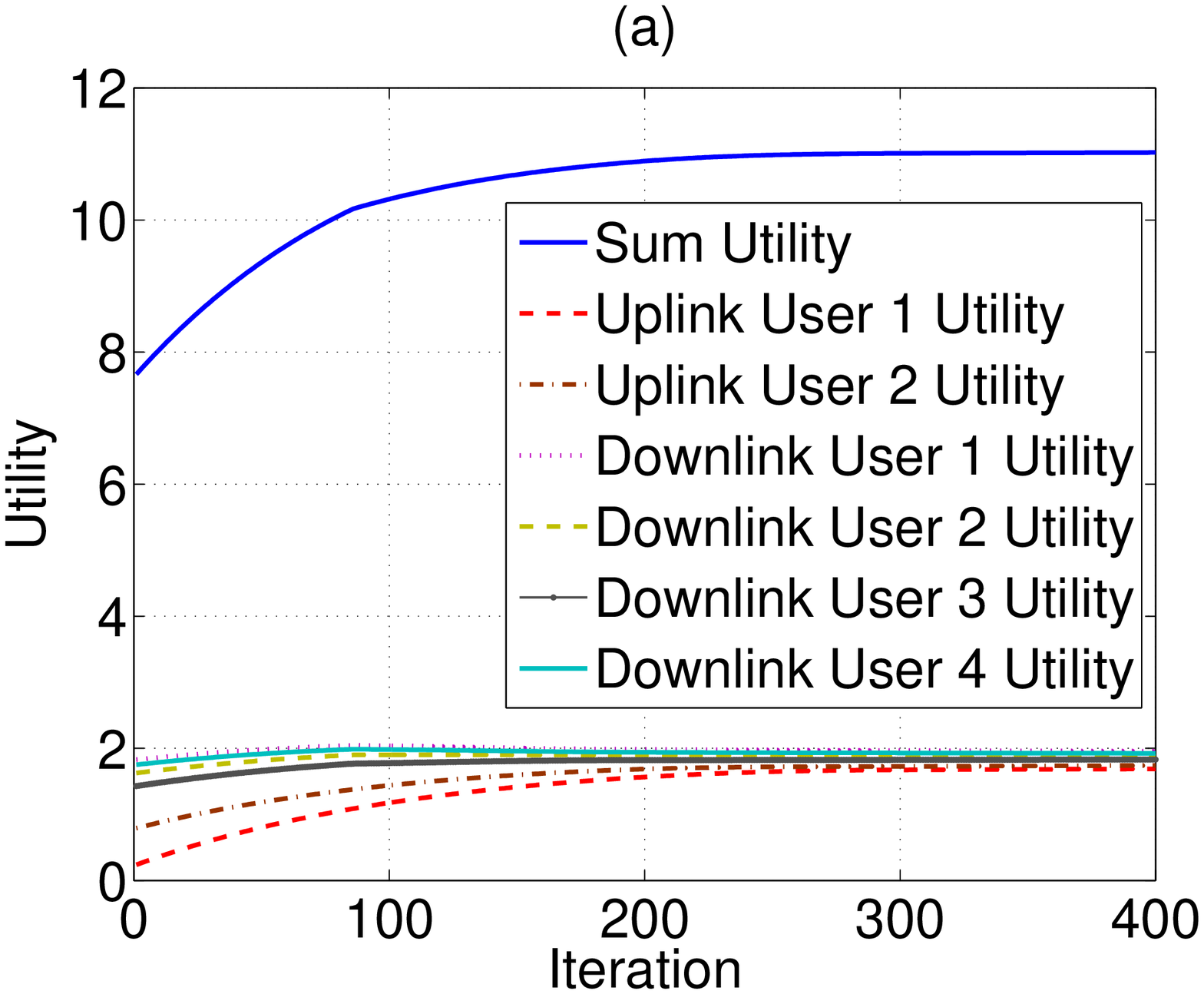}
\includegraphics[width=3.1in]{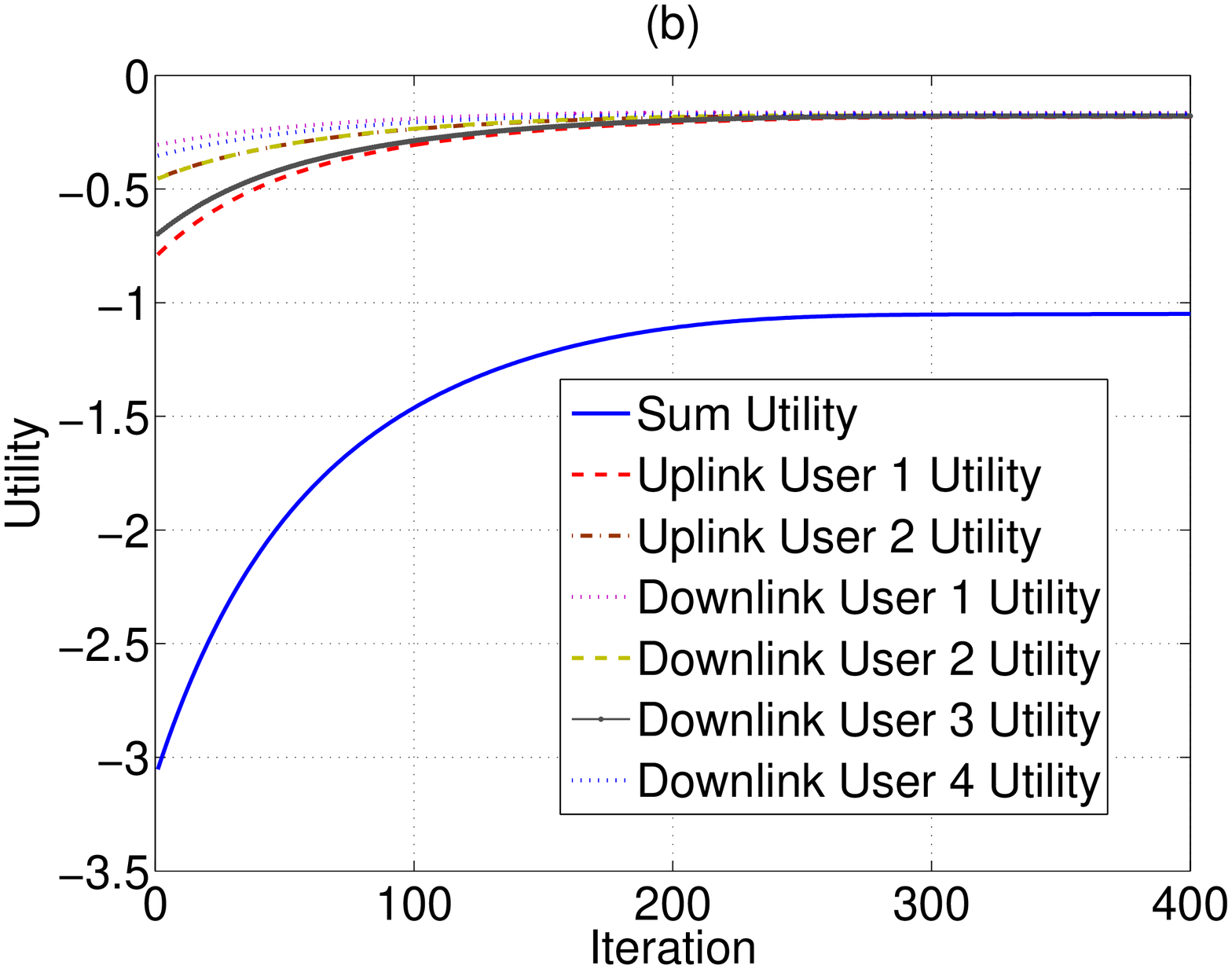}
\vspace{-7.5pt}
\caption{Convergence of power control algorithm. (a) Proportional fair utility; (b) Minimum potential delay utility.}
\label{fig:ROC}
\vspace{-8pt}
\end{figure}

Fig~\ref{fig:Two-user}(a)-(b) plot convergence performance of the proposed algorithm. We consider the scenario with $128$ antennas at the BS, with $2$ uplink users and $4$ downlink users. The uplink and downlink path loss parameters are $g^{ul}_1=-50 dB, g^{ul}_2=-45 dB, $ and $g^{dl}_1=-56 dB, g^{dl}_2=-61 dB, g^{dl}_3=-65 dB, g^{dl}_4=-58 dB$. The inter-node interference channel gain is $g^I_{1,1}=-59 dB, g^I_{1,2}=-60 dB, g^I_{2,1}=-62 dB, g^I_{2,2}=-55 dB$. Fig~\ref{fig:Two-user}(a) plots the evolution of the utility function for proportional fairness utilities, i.e., $U_k(r)=\log r, k\in \mathcal{K}_{ul}\cup\mathcal{K}_{dl}$. Fig~\ref{fig:Two-user}(b) plots the evolution of the utility function for minimum potential delay utility, i.e., $U_k(r)=-1/r, k\in \mathcal{K}_{ul}\cup\mathcal{K}_{dl}$. It can be observed that the algorithm converges at around $200$ iterations.

\section{Conclusion}
\label{sec:conclusion}
In this paper, we study a massive MIMO full-duplex system where the BS contains a large number of full-duplex-capable antennas serving multiple single-antenna half-duplex up and -downlink users. In such system, inter-node inference becomes the main design challenge, where transmission of uplink users creates interference to the reception of downlink users. Because of the vast amount of inter-node interference channels, centralized control by aggregating all network information at the BS will incur significant overhead in the network. We develop an efficient and scalable one-hop information architecture that leverages the unique structure of the massive MIMO full-duplex system. Based on the one-hop information architecture, we propose a distributed power allocation algorithm to optimally manage the inter-node interference, and show how the information can be obtained via a overhearing-based scheme. 
We characterize the performance of the optimal power control algorithm in the asymptotic regimes where the number of users and the number of antennas at the BS all scale up. We show that power control in large number of BS antennas regime can not only bring utility gains, it also improves the energy efficiency of mobile users.

\bibliographystyle{abbrv}
\bibliography{Mobihoc2015}  
%
%

\appendices
\section{Proof of Theorem~2}

We let $\Phi_{\rho}^l=\mathcal{K}^l_{ul}\setminus\Theta_{\rho}^l$. Hence for all $i\in \Phi_{\rho}^l$, $P^{ul,l,*}_i>\rho P^{ul}_{max}$. We also let $\tilde{U}'(r)=\max_{m}\{(U^m)'(r),U^m\in\mathcal{U}\}$. 

We prove this lemma by contradiction. Suppose there exist a subsequence $l_m, m=1,2,\cdots$, and $0<\kappa\leq 1$ such that $\lim_{m\rightarrow\infty} |\Phi_{\rho}^{l_m}|/\mathcal{K}_{ul}=\kappa$. We let $\mathbf{P}^{ul,l_m,new}$ be the power allocation vector where $P^{ul,l_m,new}_i=\rho_0 P^{ul}_{max}, i\in \Phi_{\rho}^{l_m}$ with $\rho_0{<}\rho$, and ${P}^{ul,l_m,new}_i=P^{ul,l_m,*}_{i}, i\in \Theta^{l_m}_{\rho}$. For uplink users $i\in\Phi_{\rho}^{l_m}$ and $m$ being large, the achievable transmission rate when $P^{ul}_i=P^{ul,l_m,new}_{i}$ is expressed as follows,
\begin{align}
&R_i^{ul}(P^{ul,l_m,new}_{i},M_{l_m})
=\log\Big(\frac{M_{l_m}\cdot \rho_0 P^{ul}_{max} g_i^{ul}}{N_0}\Big)\nonumber\\
=& \log\Big(\frac{M_{l_m}\cdot P^{ul,l_m,*}_i g_i^{ul}}{N_0}\Big){+}\log (\frac{M_{l_m}\cdot  \rho_0 P^{ul}_{max}g_i^{ul}}{M_{l_m}\cdot P^{ul,l_m,*}_i g_i^{ul}})\nonumber\\
=&R_i^{ul}(P^{ul,l_m,*}_{i},M_{l_m})+\log g_i({l_m}).\label{eq:UL_opt_rate}
\end{align}
where $g_i(l)>{\rho_0}.$ From~(\ref{eq:UL_opt_rate}) and concavity of utility function, we have for $i\in\Phi^{l_m}$
\begin{align}
& U_i(R_i^{ul}(P^{ul,l_m,new}_{i},M_{l_m}))-U_i(R_i^{ul}(P^{ul,l_m,*}_{i},M_{l_m}))\nonumber\\
\geq& U_i'(R_i^{ul}(P^{ul,new}_{i},M_{l_m}))\cdot \log g_i(l_m). \label{eq:up_Ui_diff}
\end{align}

For the downlink users, the achievable transmission rate satisfies
\begin{align}
&R_j^{dl}(P^{dl,l_m,*}_{j},\mathbf{P}^{ul,l_m,new})=\log\Big(\frac{M\cdot P^{dl,l_m,*}_j g_j^{dl}}{\sum_{i=1}^{K_{ul}}g^I_{ij}P^{ul,new}_i+N_0}\Big)\nonumber\\
=&\log\Big(\frac{M\cdot P^{dl,l_m,*}_j g_j^{dl}}{\sum_{i\in\Phi^{l_m}}g^I_{ij}\rho_0 P^{ul}_{max}+\sum_{i\in\Theta^{l_m}}P^{ul,l_m,*}_i g^I_{ij}+N_0}\Big)\nonumber\\
= &R_j^{dl}\big(P^{dl,l_m,*}_{j}, \mathbf{P}^{ul,l_m,*}){+}\log f_j(l_m),\label{eq:DL_opt_rate}
\end{align}
where
\begin{align*}
f_j(l_m)=&\frac{\sum_{i\in\mathcal{K}_{ul}^{l_m}}P^{ul,l_m,*}_i g^I_{ij}}{\big(\sum_{i\in\Phi^{l_m}}g^I_{ij}\rho_0 P^{ul}_{max}+\sum_{i\in\Theta^{l_m}}P^{ul,l_m,*}_i g^I_{ij}+N_0\big)}\nonumber\\
=&1+\frac{\sum_{i\in\Phi^{l_m}}g^I_{ij}P^{ul,l_m,*}_i-\sum_{i\in\Phi^{l_m}}g^I_{ij}\rho_0 P^{ul}_{max}-N_0}{\sum_{i\in\Phi^{l_m}}g^I_{ij}\rho_0 P^{ul}_{max}+\sum_{i\in\Theta^{l_m}}P^{ul,l_m,*}_ig^I_{ij}+N_0},
\end{align*}
which satisfies,
\begin{align*}
f_j(l_m)\geq&1+\frac{\big(\sum_{i\in\Phi^{l_m}_{\rho}}g^I_{ij}\big[\rho_l-\rho_0\big]P^{ul}_{max}\big)-N_0}{\big(\sum_{i\in\Phi^{l_m}_{\rho}}g^I_{ij}\rho P^{ul}_{max}+\sum_{i\in\Theta^{l_m}_{\rho}}P^{ul,l_m,*}_i+N_0}\nonumber\\
=&1+\frac{\big[\rho-\rho_0\big]P^{ul}_{max}\sum_{i\in\Phi^{l_m}_{\rho}}g^I_{ij}-N_0}{\big(\sum_{i\in\Phi^{l_m}_{\rho}}g^I_{ij}\rho P^{ul}_{max}+\sum_{i\in\Theta^{l}}P^{ul,l_m,*}_ig^I_{ij}\big)+N_0}.\nonumber
\end{align*}

Hence
\begin{align*}
\limsup_{m\rightarrow\infty}f_j(l_m)\geq &1+\limsup_{m\rightarrow\infty}\frac{\big[\rho-\rho_0\big]P^{ul}_{max}\sum_{i\in\Phi^{l_m}}g^I_{ij}/|\mathcal{K}_{ul}^{l_m}|}{\big(\sum_{i\in\mathcal{K}_{ul}^{l_m}}g^I_{ij}\rho P^{ul}_{max}\big)/|\mathcal{K}_{ul}^{l_m}|}\nonumber\\
\geq& 1+\frac{\big[\rho-\rho_0\big]C_{\kappa}}{\rho\mathbb{E}[g^I_{ij}]}.\nonumber
\end{align*}
where $C_{\kappa}=\mathbb{E}[g^I_{ij} | g^I_{ij}\leq \alpha]$ with $\alpha$ being such that $P(g^I_{ij}\leq \alpha)=\kappa$. Next consider the difference in utility. From~(\ref{eq:DL_opt_rate}) and concavity of utility, we have for downlink user $j$,
\begin{align}
& U_j(R_j^{dl}({P}^{dl,l_m,*}_{j},\mathbf{P}^{ul,l_m,new}))-U_j(R_j^{dl}(P^{dl,l_m,*}_{j},\mathbf{P}^{ul,l_m,*}))\nonumber\\
\geq& U_j'(R_j^{dl,l_m,*}(P^{dl,l_m,*}_{j},\mathbf{P}^{ul,l_m,new}))\cdot \log f_j(l_m). \label{eq:up_Di_diff}
\end{align}

We let $\mathbf{P}^{dl,l_m,new*}$ denote the optimal downlink power allocation to maximize the sum-downlink-utility given uplink power allocation $\mathbf{P}^{ul,l_m,new}$. The difference in overall sum-utility is
\begin{align*}
&\Big[\hspace{-4pt}\sum_{\substack{i\in\mathcal{K}^{l_m}_{ul}}}\hspace{-4pt}U_i^{l_m,new}{+}\hspace{-4pt}\sum_{j\in\mathcal{K}_{dl}^{l_m}}\hspace{-4pt}U_j^{l_m,new,*}\Big]-\Big[\hspace{-4pt}\sum_{\substack{i\in\mathcal{K}^{l_m}_{ul}}}\hspace{-4pt}U_i^{l_m,*}{+}\hspace{-4pt}\sum_{j\in\mathcal{K}^{l_m}_{dl}}\hspace{-4pt}U_j^{l_m,*}\Big]\\
\geq& \Big[\hspace{-4pt}\sum_{\substack{i\in\mathcal{K}^{l_m}_{ul}}}\hspace{-4pt}U_i^{l_m,new}{+}\hspace{-4pt}\sum_{j\in\mathcal{K}_{dl}}\hspace{-4pt}U_j^{l_m,new}\Big]-\Big[\hspace{-4pt}\sum_{\substack{i\in\mathcal{K}^{l_m}_{ul}}}\hspace{-4pt}U_i^{l_m,*}{+}\hspace{-4pt}\sum_{j\in\mathcal{K}^{l_m}_{dl}}\hspace{-4pt}U_j^{l_m,*}\Big]\\
=& \Big[\hspace{-4pt}\sum_{\substack{i\in\mathcal{K}^{l_m}_{ul}}}\hspace{-4pt}U_i^{l_m,new}-U_i^{l_m,*}\Big]{+}\Big[\sum_{j\in\mathcal{K}^{l_m}_{dl}}\hspace{-4pt}U_j^{l_m,new}-U_j^{l_m,*}\Big]\\
\geq& \sum_{\substack{i\in\Phi^{l_m}}}U_i'(R_i^{ul}(P^{ul,l_m,new}_{i},M_{l_m}))\cdot \log g_i(l_m)\\
&\hspace{0.2in}+\sum_{j\in\mathcal{K}^{l_m}_{dl}}U_j'(R_j^{dl,l_m,*}((P^{dl}_{j},\mathbf{P}^{ul,l_m,new}))\cdot \log f_j(l_m).
\end{align*}
where the last inequality holds from~(\ref{eq:up_Ui_diff})(\ref{eq:up_Di_diff}). We then have
\begin{align*}
&\liminf_{m\rightarrow\infty}\Big[\sum_{\substack{i\in\Phi^{l_m}}}U_i'(R_i^{ul}(P^{ul,l_m,new}_{i},M_{l_m}))\cdot \log g(l_m)\\
&\hspace{0.2in}+\sum_{j\in\mathcal{K}^{l_m}_{dl}}U_j'(R_j^{dl}((P^{dl,l_m,*}_{j},\mathbf{P}^{ul,l_m,new}))\cdot \log f_j(l_m)\Big]\\
\geq&  \liminf_{m\rightarrow\infty}\Big[|{\Phi}^{l_m}_{\rho}|\widehat{U}'(R_i^{ul}(P^{ul,l_m,new}_{i},M_{l_m}))\log{\rho_0}\\
&\hspace{0.2in}+\sum_{j\in\mathcal{K}^{l_m}_{dl}}U_j'(R_j^{dl}((P^{dl,l_m,*}_{j},\mathbf{P}^{ul,l_m,new}))\cdot \log f_j(l_m)\Big],
\end{align*}
where recall that $\widehat{U}'(r)=\max_{i} {U}_i'(r)$. Note that 
\begin{align}
&\limsup_{m\rightarrow\infty} R_j^{dl}(P^{dl,l_m,*}_{j},\mathbf{P}^{ul,l_m,new})\nonumber\\
=&\limsup_{m\rightarrow\infty}\log\Big(\frac{M\cdot P^{dl,l_m,*}_j g_j^{dl}}{\sum_{i\in\Phi^{l_m}}g^I_{ij}\rho_0 P^{ul}_{max}+\sum_{i\in\Theta^{l_m}}P^{ul,l_m,*}_ig^I_{ij}+N_0}\Big)\nonumber\\
\leq& \limsup_{m\rightarrow\infty}\log\Big(\frac{M\cdot P^{dl,l_m,*}_j g_j^{dl}}{\sum_{i\in\Phi^{l_m}}g^I_{ij}\rho_0 P^{ul}_{max}+\sum_{i\in\Theta^{l_m}}P^{ul,l_m,*}_i g^I_{ij}+N_0}\Big)\nonumber\\
\leq& \limsup_{m\rightarrow\infty}\log\Big(\frac{M\cdot P^{dl,l_m,*}_j g_j^{dl}/|\mathcal{K}^{l_m}_{ul}|}{\big(\sum_{i\in\Phi^{l_m}}g^I_{ij}\rho_0 P^{ul}_{max}+N_0\big)/|\mathcal{K}^{l_m}_{ul}|}\Big)\nonumber\\
=& \limsup_{m\rightarrow\infty}\log\Big(\frac{M\cdot P^{dl,l_m,*}_j g_j^{dl}/|\mathcal{K}^{l_m}_{ul}|}{C_{\kappa}\mathbb{E}[g^I_{ij}]\rho_0 P^{ul}_{max}}\Big),\label{eq:limit_half}
\end{align}
where recall that $C$ was defined in the statement of the theorem, and $C_{\kappa}$ was previously defined in this proof. We proceed with the following lemma.
\newtheorem{lemma}{Lemma}
\begin{lemma}
Let $\Psi_{\omega}^l=\{j\in\mathcal{K}_{dl}^l: P_j^{dl,l,*}<\omega\frac{P^{ul}_{tot}}{|\mathcal{K}_{dl}^l|}\}$ for $\omega>0$. There exists $\omega>0$ such that $\limsup_{l\rightarrow\infty}\frac{|\Psi_{\omega}^l|}{|
\mathcal{K}_{dl}^l}>0$. 
\end{lemma}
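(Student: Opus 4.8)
The plan is to prove the statement by a direct pigeonhole (Markov-type) argument on the total downlink power budget, so that optimality of $\mathbf{P}^{dl,l,*}$ is in fact not needed beyond feasibility: the bound holds for \emph{any} power vector obeying the sum constraint, the optimal allocation being a special case. (I read the $P^{ul}_{tot}$ in the definition of $\Psi^l_\omega$ as a typo for the downlink budget $P^{dl}_{tot}$, since only the latter appears as a total-power constraint in~(\ref{eq:dl_sun_pwr_log}).) The single fact I would use is the feasibility constraint $\sum_{j\in\mathcal{K}^l_{dl}} P^{dl,l,*}_j \le P^{dl}_{tot}$, whose right-hand side equals $|\mathcal{K}^l_{dl}|$ times the per-user average $P^{dl}_{tot}/|\mathcal{K}^l_{dl}|$.

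First I would fix any constant $\omega>1$ and pass to the complementary ``high-power'' set $(\Psi^l_\omega)^c=\{j\in\mathcal{K}^l_{dl}: P^{dl,l,*}_j\ge \omega P^{dl}_{tot}/|\mathcal{K}^l_{dl}|\}$. Summing the per-user lower bound over this set and comparing against the budget gives
\begin{align}
|(\Psi^l_\omega)^c|\cdot \omega\frac{P^{dl}_{tot}}{|\mathcal{K}^l_{dl}|}\le \sum_{j\in(\Psi^l_\omega)^c}P^{dl,l,*}_j\le\sum_{j\in\mathcal{K}^l_{dl}}P^{dl,l,*}_j\le P^{dl}_{tot},\nonumber
\end{align}
hence $|(\Psi^l_\omega)^c|\le |\mathcal{K}^l_{dl}|/\omega$. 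Taking complements yields $|\Psi^l_\omega|\ge (1-1/\omega)\,|\mathcal{K}^l_{dl}|$ for every $l$, so $\limsup_{l\to\infty}|\Psi^l_\omega|/|\mathcal{K}^l_{dl}|\ge 1-1/\omega>0$; choosing, e.g., $\omega=2$ already gives a fraction at least $1/2$, which proves the claim.

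I do not expect a deep obstacle; the one thing to get right is the \emph{direction} of the estimate --- one controls the lower tail $\{P^{dl,l,*}_j<\omega\cdot\mathrm{avg}\}$ indirectly by bounding the upper tail, which is where the budget constraint bites. The subtler point is to verify that the produced $\omega$ meshes with how the lemma is consumed in the proof of Theorem~\ref{theorem:linear}: for any \emph{fixed} $\omega$, membership $j\in\Psi^l_\omega$ forces $|\mathcal{K}^l_{dl}|\,P^{dl,l,*}_j<\omega P^{dl}_{tot}=O(1)$, so the right-hand side of~(\ref{eq:limit_half}), which behaves like $\log\!\big(C|\mathcal{K}^l_{dl}|\,P^{dl,l,*}_j g^{dl}_j/\text{const}\big)$ once the scaling $M_{l}/|\mathcal{K}^l_{ul}|\sim C|\mathcal{K}^l_{dl}|$ is used, stays uniformly bounded (the path losses $g^{dl}_j$ being bounded within the cell). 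This keeps the rates $R^{dl}_j$ uniformly bounded and hence, since each $U'_k$ is positive and continuous and there are only finitely many utility classes by (A2), keeps $U'_j(R^{dl}_j)$ bounded away from zero on a positive fraction of downlink users --- exactly what the dominance argument needs.

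Finally, should one instead require a genuinely \emph{small} $\omega$ (a positive fraction of strictly below-average users, which Markov alone cannot deliver), I would fall back on the water-filling optimality condition for the downlink allocation, $P^{dl,l,*}_j=U'_j(R^{dl,l,*}_j)/\lambda^l$ with $\lambda^l$ the multiplier of the binding sum-power constraint, so that the normalized powers satisfy $P^{dl,l,*}_j/\mathrm{avg}=U'_j(R^{dl,l,*}_j)/\overline{U'}$. I would then argue these ratios cannot concentrate at a single value because the interference levels $\mathsf{IN}_j$ and the utility classes in $\mathcal{U}$ are heterogeneous; this is the only place where (A2) and $\mathbb{E}[g^I_{ij}]=\mathcal{E}$ would genuinely enter, and it is the step I would expect to require the most care.
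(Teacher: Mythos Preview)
Your proof is correct and takes essentially the same approach as the paper: both use the sum-power constraint $\sum_j P^{dl,l,*}_j\le P^{dl}_{tot}$ together with the per-user lower bound $\omega P^{dl}_{tot}/|\mathcal{K}^l_{dl}|$ on the complement $(\Psi^l_\omega)^c$ to conclude that this complement cannot occupy too large a fraction of $\mathcal{K}^l_{dl}$. The paper phrases the same pigeonhole as a proof by contradiction (assuming $|\Psi^l_\omega|/|\mathcal{K}^l_{dl}|\to 0$ for every $\omega>1$ and deriving $\sum_j P^{dl,l,*}_j\ge\omega P^{dl}_{tot}>P^{dl}_{tot}$), whereas you give the direct Markov bound $|\Psi^l_\omega|/|\mathcal{K}^l_{dl}|\ge 1-1/\omega$ for every $l$; this is a purely cosmetic difference, and your reading of $P^{ul}_{tot}$ as a typo for $P^{dl}_{tot}$ is consistent with the paper's own proof.
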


\begin{proof}
Suppose the statement is not true. Then for any $\omega>1$ we have $\lim_{l\rightarrow\infty}\frac{|\Psi_{\omega}^l|}{|
\mathcal{K}_{dl}^l|}=0$. Therefore 
\begin{align*}
\lim_{l\rightarrow\infty}\sum_{j\in \mathcal{K}_{dl}^l}P_j^{dl,l,*}\geq\lim_{l\rightarrow\infty}\sum_{j\notin \Psi_{\omega}^l}P_j^{dl,l,*}\geq\omega P^{ul}_{tot}>P^{ul}_{tot}.\nonumber
\end{align*}
Hence the total downlink transmission power exceeds the power constraint, thus establishing a contradiction.
\end{proof}

According to the above lemma, we let $\omega_0$ and $\varepsilon$ be such that $\limsup_{m\rightarrow\infty}\frac{|\Psi_{\omega_0}^{l_m}|}{|
\mathcal{K}_{dl}^{l_m}|}=\varepsilon$. Therefore
\begin{align*}
&  \liminf_{m\rightarrow\infty}\Big[|{\Phi}^{l_m}_{\rho}|\widehat{U}'(R_i^{ul}(P^{ul,l_m,new}_{i},M_{l_m}))\log{\rho_0}\\
&\hspace{0.2in}+\sum_{j\in\mathcal{K}^{l_m}_{dl}}U_j'(R_j^{dl}((P^{dl,l_m,*}_{j},\mathbf{P}^{ul,l_m,new}))\cdot \log f_j(l_m)\Big],\\
\geq & \liminf_{m\rightarrow\infty}\Big[|{\Phi}^{l_m}_{\rho}|\widehat{U}'(R_i^{ul}(P^{ul,l_m,new}_{i},M_{l_m}))\log{\rho_0}\\
&\hspace{0.2in}+|\Psi_{\omega_0}^{l_m}|{U}_j'(\frac{C \omega_0 g_j^{dl}}{C_{\kappa}\mathbb{E}[g^I_{ij}]\rho_0})\cdot \log f_j(l_m)\Big],\\
=& K^{l_m}_{ul} \liminf_{m\rightarrow\infty}\Big[\kappa\widehat{U}'(R_i^{ul}(P^{ul,l_m,new}_{i},M_{l_m}))\log{\rho_0}\\
&\hspace{0.2in}+\omega_0\frac{|\mathcal{K}^{l_m}_{dl}|}{|\mathcal{K}^{l_m}_{ul}|}{U}_j'(\frac{C \omega_0 g_j^{dl}}{C_{\kappa}\mathbb{E}[g^I_{ij}]\rho_0})\cdot \log f_j(l_m)\Big],\\
>&0,
\end{align*}
where the last equality holds because the first term in the limit decays to $0$ with $l_m$ of~(\ref{eq:limit_half}), $\widehat{U}'(r), {U}'(r)$ monotonically decrease with $r$, the path loss $g_j^{dl}{<}1$, and because from the theorem statement $\liminf_{m\rightarrow\infty}\frac{|\mathcal{K}^{l_m}_{dl}|}{|\mathcal{K}^{l_m}_{ul}|}>0$.

Therefore choosing $(\mathbf{P}^{ul,l,new},\mathbf{P}^{dl,l,new*})$ asymptotically achieves utility higher than the optimal power allocation $(\mathbf{P}^{ul,l,*},\mathbf{P}^{dl,l,*})$, which contradicts to $(\mathbf{P}^{ul,l,*},\mathbf{P}^{dl,l,*})$ being the optimal power allocation. We hence proved the Theorem.

\end{document}